\date{}
\begin{document}

\title{Information Exchange Limits in Cooperative MIMO Networks}

\author{Ali Tajer\thanks{Electrical Engineering Department, Princeton University, Princeton, NJ 08540 (email: tajer@princeton.edu).} \and Xiaodong Wang \thanks{Electrical Engineering Department, Columbia University, New York,
NY 10027 (email: wangx@ee.columbia.edu).}}


\maketitle

\begin{abstract}
Concurrent presence of {\em inter}-cell and {\em intra}-cell interferences constitutes a major impediment to reliable downlink transmission in multi-cell multiuser networks. Harnessing such interferences largely hinges on two levels of information exchange in the network: one from the users to the base-stations ({\em feedback}) and the other one among the base-stations ({\em cooperation}). We demonstrate that exchanging a {\em finite} number of bits across the network, in the form of feedback and cooperation, is adequate for achieving the optimal capacity scaling. We also show that the {\em average} level of information exchange is independent of the number of users in the network. This level of information exchange is considerably less than that required by the existing coordination strategies which necessitate exchanging {\em infinite} bits across the network for achieving the optimal sum-rate capacity scaling. The results provided rely on a constructive proof.
\end{abstract}


\section{Introduction}
\label{sec:intro}

Multiple-antenna broadcast transmission is known to offer substantial improvement in spectral efficiency via spatial multiplexing. Such gains depend heavily on the availability of some channel state information (CSI) at the transmitter. Acquiring such CSI allows for managing the intra-cell interference which is due to spatial multiplexing. There exists a rich literature on the design of user scheduling that spans a wide range of transceiver designs with different levels of complexity and CSI feedback requirements \cite{Caire:IT03, Goldsmith:JSAC03, Sharif:COM07}.

In {\em multi}-cell multiuser networks, besides the intra-cell interference caused by the residual interference remaining after spatial multiplexing within each cell, another impediment arises from inter-cell interference due to the ever-shrinking cell sizes, which makes the networks operate in interference-limited, rather than noise-limited, regimes. Alleviating the effects of inter-cell interference requires the base-stations to cooperatively adjust their transmissions~\cite{Gesbert:SPM07, Zhang:WCOM08, Huang:WCOM09, Zhang:WCOM09}. Coordination among the base-stations for downlink transmission imposes some information exchange (cooperation) among the base-stations. Once {\em full} CSI feedback and {\em full} cooperation is viable, i.e., {\em all} the users report their CSI to {\em all} base-stations, and {\em all} base-stations share their information streams, the multi-cell network can be essentially modeled as a single-cell multiple-input multiple-output (MIMO) broadcast channel with a transmit antenna array of a higher dimension. Using the literature on MIMO broadcast channel immediately establishes the optimal sum-rate capacity scaling of multi-cell multiuser networks which is achievable via dirty-paper coding~\cite{Sharif:COM07}.

Such optimal capacity scaling, however, necessitates {\em full} feedback and cooperation across the network. This requirement, specially in large networks, can potentially overburden the feedback channels as well as the backbone channels that connect the base-stations.  Therefore, it becomes imperative to study whether such full information exchange is all necessary or contains some redundancy that can be avoided. The question we are specifically trying to address is: what is the minimum amount of information exchange, in the form of either CSI feedback or base-station cooperation, that ensures retaining the optimal sum-rate capacity scaling?

We consider a multi-cell network model, in which every few neighboring cells are collectively regarded as a super-cell that serve the users within that super-cell collaboratively. The analysis demonstrates that for the purpose of achieving the optimal scaling, cooperation among the super-cells, while desirable, is not necessary. In other words, while collective transmission by the base-stations across the super-cells has certain advantages, as long as achieving the optimal capacity scaling is concerned, the super-cells can decide about their scheduling policies independently. Furthermore, each user has to provide feedback to only the base-stations of its designated super-cell. These two relaxations essentially imply that base-stations of each super-cell collect only {\em local} information and any information exchange across super-cells are avoided. We also quantify the aggregate amount of information exchange necessary for achieving the optimal sum-rate capacity scaling. Specifically, we show that when the network contains $M$ super-cells, each consisting of $Q$ base-stations and when each base-station is equipped with $N_t$ transmit antennas, the minimum aggregate amount of information exchange in the network to guarantee optimal capacity scaling is upper bounded by $MQN_t \lceil\log QN_t\rceil $, which is independent of the number of users in the network.

These results in this paper are provided through a constructive proof. The flow of the proof is as follows. After discussing the network model in Section~\ref{sec:model} we describe a scheduling procedure and feedback mechanism in Section~\ref{sec:algorithm}. Next, in Section~\ref{sec:analysis} we show that proposed scheduler yields the optimal sum-rate capacity scaling. Then  in Section~\ref{sec:feedback} we assess the aggregate amount of information exchange that the proposed scheduler requires.  Section~\ref{sec:fairness} addresses some practical issues, e.g., the impact of the size of the super-cells and fairness, and finally Section~\ref{sec:conclusion} concludes the paper. To enhance the flow of the content, most of the proofs are relegated in the appendices.

We remark that as opposed to multi-cell MIMO networks, there exists a rich literature on single-cell downlink transmission that investigate the interplay between the feedback load and broadcast sum-rate throughput scaling. One group of studies, c.f. \cite{Sharif:IT05, Yoo:JSAC07, Huang:SP07}, for single-cell systems consider feeding back \emph{unquantized} real values (channel magnitude or $\snr$, which in theory requires infinite-rate feedback.  In another line of studies, c.f.,~\cite{Huang:VT09}, finite-rate feedback mechanisms are introduced. However, the optimal sum-rate throughput is achievable in these studies when the number of feedback bits tends to infinity. The proposed feedback mechanisms in~\cite{Sanayei:IT07, Sanayei:WCOM07}, impose only finite-rate feedback loads and achieve the optimal multiuser diversity gain. However, since they do not allow for simultaneous transmissions to multiple users, they cannot capture the multiplexing gain. Finally, the heuristic feedback mechanisms proposed in \cite{Tang:COML:05, Tang:COM:07} have limited feedback loads for which the sum-rate throughput scaling analysis is not provided. As a by-product of our analysis we find that in a {\em single}-cell downlink transmission it is possible to achieve the optimal capacity scaling rate with {\em finite}-rate feedback. It is also noteworthy on a different research direction, that capacity scaling for some infrastructure-based sensor networks are provided in \cite{Shin:arxive08}, \cite{Liu:INFOCOM08}

\section{System Model}
\label{sec:model}
Consider the downlink transmission in a MIMO network that consists of $M$ super-cells, each comprising of $Q$ regular cells equipped with one base-station. We use the shorthand $\Q_m$ to refer to the $m^{th}$ super-cell and denote its base-stations by $B^r_m,$ for $r\in\{1,\dots,Q\}$. Let $K$ denote the number of users per super-cell and assume that the base-stations of each super-cell accomplish the downlink transmissions collaboratively. Moreover, we assume that all the base-stations within a super-cell are fully cooperative, i.e., are connected through a backbone link that allows them to share their information streams and to carry out fully coordinated downlink transmissions. The case that such cooperation is not viable corresponds to the setting $Q=1$. We also use the shorthand $U^k_n$ to denote the $k^{th}$ user in the $n^{th}$ super-cell, respectively, for $k\in\{1,\dots,K\}$ and $n\in\{1,\dots,M\}$.

Define  $\bH_{n,m}^{k,r}\in\mathbb{C}^{N_r\times N_t}$ as the downlink channel from base-station $B^r_m$ to user $U^k_n$ and let $\bx^r_m(t)\in\mathbb{C}^{N_t\times 1}$ denote the signal vector transmitted by base-station $B^r_m$ at time instance $t$. The received signal by $U^k_n$ at time instance $t$ is therefore given by
\begin{equation*}
    \by^k_n(t)=\sum_{m=1}^M\sum_{r=1}^Q\sqrt{\gamma^{k,r}_{n,m}}\ \bH^{k,r}_{n,m}\bx^r_m(t)+\bz^k_n(t)\ ,
\end{equation*}
where $\gamma^{k,r}_{n,m}\in\mathbb{R}^+$ is incorporated to capitalize the combined effect of transmission power and signal attenuation (path-loss) between $B^r_m$ and $U^k_n$. In non-homogenous networks the terms $\{\gamma^{k,r}_{n,m}\}$ are distinct as different users undergo different path-loss and shadowing effects. Also,  $\bz^k_n\in\mathbb{C}^{N_r\times 1}$ accounts for the additive white Gaussian noise distributed as $\mathcal{CN}(\boldsymbol{0},\bI_{N_r})$. We consider block frequency-flat fading channels and assume that the entries of the channel matrix $\bH^{k,r}_{n,m}$ are independent and identically distributed as complex Gaussian $\mathcal{CN}(0,1)$. In order to facilitate spatial multiplexing and to harness its ensuing interference as well as the inter-cell interference, the base-stations employ linear precoding. We define $\{\bomega^r_m(\ell)\}_{\ell=1}^{N_t}$ as the set of linear precoders employed by base-station $B^r_m$. By further defining  $x^r_m(\ell,t)$ as the information symbol that $B^r_m$ transmits along the beamforming vector $\bomega^r_m(\ell)$, at any time instance $t$ we have
\begin{equation}
    \label{eq:transmit}
    \bx^r_m(t)=\sum_{\ell=1}^{N_t}x^r_m(\ell,t)\;\bomega^r_m(\ell)\ .
\end{equation}
Motivated by refraining from information feedback about channel directions the base-stations exploit the notion of orthogonal random beamforming first proposed in~\cite{Sharif:IT05}. For this purpose, at the beginning of each block transmission, each base-station generates  $N_t$ orthonormal vectors $\{\bomega^r_m(\ell)\}_{\ell=1}^{N_t}$, according to an isotropic distribution~\cite{Hassibi:IT02}. The user $U^k_n$ knows $\bH^{k,r}_{n,n}\bomega^r_n(\ell)$ for $\ell\in\{1,\dots,N_t\}$ and $r\in\{1,\dots,Q\}$ perfectly and instantaneously. We denote the transmission power per information stream by $\rho\dff\frac{P}{N_t}$, where $P$ is the average power per base-station. Also, throughout the paper, we use the convention $a_K\doteq b_K$ to denote asymptotic equality, i.e., $\lim_{K\rightarrow\infty}\frac{a_K}{b_K}=1$. Operators $\dotlt$ and $\dotgt$ are defined accordingly. All the logarithms, unless otherwise mentioned, are in base 2 and throughput are in bits/sec/Hz.

\section{Scheduling Framework}
\label{sec:algorithm}
We first consider the case that the users have one receive antenna and will later generalize it the case of arbitrary number of receive antennas in Section~\ref{sec:analysis}. We assume that the users employ single-user decoders such that all the inter-cell and intra-cell interferers are treated as Gaussian noise. While more advanced receivers will offer better performance, we show that the simple single-user decoders are sufficient to fully capture the optimal multiplexing and multiuser diversity gains in MIMO networks.

Due to full cooperation, the information stream that some base-station in the super-cell $\Q_n$ intends to transmit to user $U^k_n$ is known to all base-stations of that super-cell. Therefore, the message intended for user $U^k_n$ can be transmitted along any arbitrary beamforming vector of any base-station of $\Q_n$. Hence, user $U^k_n$ can compute the following $Q\times N_t$ distinct $\sinr$s given that its intended information is embedded in $x^r_n(\ell,t)$ and transmitted by the base-station $B^r_m$ along $\bomega^r_n(\ell)$, where $r\in\{1,\dots,Q\}$ and $\ell\in\{1,\dots,N_t\}$,
\begin{align}\label{eq:sinr}
    \sinr^k_n(r,\ell)= \frac{\gamma_{n,n}^{k,r}|\bH^{k,r}_{n,n}\bomega^r_n(\ell)|^2} {\sum_{m=1}^M\sum_{r'=1}^Q\gamma^{k,r'}_{n,m}\sum_{\ell'=1}^{N_t}|\bH^{k,r'}_{n,m}\bomega^{r'}_m(\ell')|^2 -\gamma_{n,n}^{k,r}|\bH^{k,r}_{n,n}\bomega^{r}_n(\ell)|^2+\frac{1}{\rho}}\ .
\end{align}
It can be readily verified that for any given $\ell$ and for any $(k,n,r)\neq (k',n',r')$, the numerators and denumerators of ${\sf SINR}^k_n(r,\ell)$ and ${\sf SINR}^{k'}_{n'}(r',\ell)$ are disjoint. In other words, any term of the form $\bH^{k,r}_{n,m}\bomega^r_n(\ell)$ that appears in the characterization of ${\sf SINR}^k_n(r,\ell)$ does not appear in that of ${\sf SINR}^{k'}_{n'}(r',\ell)$. By invoking the statistical independence of the channel coefficients and beamforming vectors, we find that ${\sf SINR}^k_n(r,\ell)$ and ${\sf SINR}^{k'}_{n'}(r',\ell)$ are also statistically independent  for any $(k,n,r)\neq (k',n',r')$. However, by noting the different attenuation factors $\{\gamma^{k,r}_{n,m}\}$ for different choices of $k,n,r$, ${\sf SINR}^k_n(r,\ell)$ and ${\sf SINR}^{k'}_{n'}(r',\ell)$ are not identically distributed. Similarly it can be shown that for any fixed set of $k,n,r$  the terms $\{{\sf SINR}^k_n(r,\ell)\}$ are statistically identical but not independent. We denote the cumulative distribution function (CDF) of $\sinr^k_n(r,\ell)$ for $\ell\in\{1,\dots,N_t\}$ by
\begin{equation*}
    F^k_{n}(x;r):\mathbb{R}_+\rightarrow[0,1]\ .
\end{equation*}
We propose a user scheduling procedure that aims at identifying and serving $QN_t$ users in each super-cell such that 1) the network sum-rate throughput captures the full multiuser diversity gain, 2) does not impose any information exchange or cooperation among the super-cells, and 3) necessitates each base-station to acquire only a {\em finite} number of feedback bits from each of its designated users. The steps involved in the proposed scheduling procedure are summarized as follows. We remark that the notion of random beamforming, first deployed in \cite{Sharif:IT05}, has been the core of many scheduling schemes for single-cell broadcast channels.
\begin{description}
  \item[1) Normalizing $\sinr$s:] For each user $U^k_n$ we define $Q$ positive scalars $\{\beta^k_n(1),\dots,\beta^k_n(Q)\}$  each pertaining to one of the base-station of the super-cell $\Q_n$.
      \begin{equation}\label{eq:lambda}
        \forall r\in\{1,\dots,Q\}:\quad\mbox{define} \quad\beta^k_{n}(r)\;\;\mbox{such that}\;\;F^k_{n}\big(\beta^k_{n}(r);r\big)=1-\frac{1}{K}\ .
      \end{equation}
      There exist {\em unique} solutions for $\beta^k_{n}(r)$ as $F^k_{n}(x;r)$ is a CDF and strictly monotonic in $x$. Next, corresponding to each user  $U^k_n$ we define the following normalized $\sinr$ terms.
      \begin{equation}\label{eq:sinrn}
        \sinrn^k_{n}(r,\ell)\dff\frac{\sinr^k_{n}(r,\ell)}{\beta^k_{n}(r)}\ ,\quad\forall r\in\{1,\dots,Q\}\quad\mbox{and}\quad \ell\in\{1,\dots,N_t\}\ .
      \end{equation}

  \item[2) Feedback:]\textcolor{white}{a}
  \begin{enumerate}
  \item Recall that each user $U^k_n$ can be served along any beamformer of the super-cell $\Q_n$. Among all beamformers, we identify the most favorable beam for the user $U^k_n$, that is the beamformer yielding the highest $\sinr$ for $U^k_n$, i.e.,
      \begin{equation}\label{eq:sinrbar}
        \bsinr^k_{n}\dff \max_{r,\;\ell}\sinrn^k_{n}(r,\ell)\ .
      \end{equation}
  \item If $\bsinr^k_{n}\geq 1$, then user $U^k_n$ feeds back the index of its most favorable beam to the base-stations of the super-cell $\Q_n$. As a super-cell has $QN_t$ beamformers, conveying such feedback requires $ \lceil\log QN_t\rceil $ information bits.
  \end{enumerate}
  \item \item[3) User Selection:]\textcolor{white}{a}
  \begin{enumerate}
  \item Based on the information fed back by the users, corresponding to each super-cell $\Q_n$ we construct the sets $\CH^r_n(\ell)$, for $r\in\{1,\dots,Q\}$ and $\ell\in\{1,\dots,N_t\}$, such that $\CH^r_n(\ell)$ contains the indices of the users in $\Q_n$ who have declared $\bomega^r_n(\ell)$ as their most favorable beam. By invoking \eqref{eq:sinrbar} we have
      \begin{equation}
        \label{eq:H}
        \CH^r_n(\ell)\dff\Big\{k\med \bsinr^k_{n}\geq 1\quad\mbox{and}\quad \bsinr^k_{n}= \sinrn^k_{n}(r,\ell) \Big\}\ .
      \end{equation}
  \item Super-cell $\Q_n$ {\em randomly} selects one user from each set $\CH^r_n(\ell)$ for $r\in\{1,\dots,Q\}$ and $\ell\in\{1,\dots,N_t\}$ and embeds the information signal intended for this user in the stream $x^r_n(\ell)$. Note that the sets $\{\CH^r_n(\ell)\}_{r,\ell}$ are mutually disjoint and as a result, it is guaranteed that a user will not be served by more than one beam.
\end{enumerate}
\end{description}
This procedure involves information exchange only in the form of feedback from the users to their designated base-stations and the super-cells do not cooperate or exchange any information. The crucial part of such scheduling is the appropriate design of the normalization factors $\beta^k_n(r)$ that results in a user selection yielding the optimal capacity scaling. The choices of the normalization factors as characterized in \eqref{eq:lambda} serve a two-fold purpose, namely they control the aggregate amount of feedback and maintain fairness. As the network size $K$ increases, the normalization factors $\beta^k_n(r)$ as defined in \eqref{eq:lambda} affect the feedback load in two ways. On one hand more users are competing for accessing the resources and it is expected that more users participate in the feedback process. On the other hand, as $K$ increases the normalization factors increase, which makes it harder for individual users in a network to satisfy the conditions for feeding back information to their designated  base stations. As shown later in the paper, the combined impact of these two effects is in favor of reducing the feedback load. Also, as for the fairness, incorporating the statistics of the $\sinr$s in the choices of the normalization factors ensures that all the users are equiprobable in satisfying their feedback constraint and consequently by invoking the selection procedure, all are equiprobable in having access to the resources. The proposed scheduling mechanism requires the users acquire the $\sinr$s yielded by different beamformers. Such $\sinr$ acquisition can be accomplished during the training phase when the users obtain the channel states prior to the downlink transmission of data.

\section{Throughput Scaling Analysis}
\label{sec:analysis}
When in a MIMO network perfect cooperation (full information exchange) among all super-cells is allowed, it can be effectively modeled as a large single-cell multiple-antenna broadcast channel with $MQN_t$ transmit antennas and $MK$ users. In ~\cite{Sharif:COM07, Sharif:IT05} it shown that for such downlink transmission the sum-rate capacity, which is achievable via dirty-paper coding, scales with increasing $K$ as
\begin{equation}\label{eq:DPC_scale}
    \bbe_\bH[R]\doteq MQN_t\log\log KN_r\ .
\end{equation}
In the sequel we analyze the sum-rate throughput scaling of our proposed procedure and demonstrate that the achievable throughput scales similar to \eqref{eq:DPC_scale}.

\subsection{Throughput Scaling for $N_r=1$}
By defining $R_n$ as the sum-rate throughput achieved for the $n^{th}$ super-cell, the expected sum-rate throughput of the network  is $\bbe_\bH[R]=\sum_{n=1}^M\bbe_\bH[R_n]$. We further define $R^r_n(\ell)$ as the rate supported through the transmission along the beamforming vector $\bomega^r_n(\ell)$. As formalized in \eqref{eq:H}, the super-cell $\Q_n$ constructs the new set $\bar\CH_n$, as the set of the user index sets $\CH^r_n(\ell)$ for all $r$ and $\ell$, i.e., $\bar\CH_n\dff\{\CH^r_n(\ell)\}_{r,\ell}$. Conditioned on the sets $\bar\CH_n$, the sum-rate throughput supported in the $n^{th} $ super-cell is the summation of the rates its individual beamformers can sustain. Therefore,
\begin{equation}\label{eq:Rm}
    \bbe_\bH[R_n\med \bar\CH_n]=\sum_{r=1}^Q\sum_{\ell=1}^{N_t}\bbe_\bH[R^r_n(\ell)\med \CH^r_n(\ell)]\ .
\end{equation}
As designated by the user scheduling procedure, for each $r\in\{1,\dots,Q\}$ and $\ell\in\{1,\dots,N_t\}$, one of the members of the set $\CH^r_n(\ell)$ is opted {\em randomly} to be served along the beamforming vector $\bomega^r_n(\ell)$. Hence, each user with its index in $\CH^r_n(\ell)$ will be selected to be served along the beamformer  $\bomega^r_n(\ell)$ with probability $\frac{1}{|\CH^r_n(\ell)|}$. As a result given the set of candidate users $\CH^r_n(\ell)$, the expected value of the rate sustained by the beamformer $\bomega^r_n(\ell)$ is equal to the arithmetic average of the rates these candidate users can sustain. By noting that if the transmission of the beamformer $\bomega^t_n(\ell)$ is intended for user $U^k_n$ then $R^r_n(\ell)=\log(1+\sinr^k_n(r,\ell))$, we obtain
\begin{eqnarray}\label{eq:Rlm}
    \bbe_\bH[R^r_n(\ell)\med \CH^r_n(\ell)]= \frac{1}{|\CH^r_n(\ell)|}\sum_{k\in\CH^r_n(\ell)} \bbe_\bH\left[\log\left(1+\sinr^k_n(r,\ell)\right)\;\big |\;\CH^r_n(\ell)\right].
\end{eqnarray}
The next lemma, which provides a lower bound on $\bbe_\bH[R^r_n(\ell)\med \CH^r_n(\ell)]$ is instrumental to analyzing the scaling rate of $\bbe_\bH[R^r_n(\ell)\med \CH^r_n(\ell)]$.
\begin{lemma}[Conditional Expected Rates]
\label{lemma:sets}
If $\beta^k_n(r)\geq 1$ $\forall n\in\{1,\dots,M\}$, $\forall k\in\{1,\dots,K\}$, and $\forall r\in\{1,\dots,Q\}$, then
\begin{align}\label{eq:Rm_bound}
   \bbe_\bH[R^r_n(\ell)\med \CH^r_n(\ell)]\;\;\geq\;\;\R^r_n(\ell,|\CH^r_n(\ell)|)\;\;\dff\;\;\frac{1}{|\CH^r_n(\ell)|}
   \sum_{j=1}^{|\CH^r_n(\ell)|} \bbe_\bH\left[\log\left(1+\sinr^{(j)}_n(r,\ell)\right)\right]\ ,
\end{align}
where $\sinr^{(j)}_n(r,\ell)$ denotes the $j^{th}$ largest element of the set $\{\sinr^1_n(r,\ell),\dots,\sinr^K_n(r,\ell)\}$.
\end{lemma}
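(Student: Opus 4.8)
The plan is to start from \eqref{eq:Rlm}, strip the conditioning on $\CH^r_n(\ell)$ down to a per-user event, replace that event by a weaker but tractable one, and then recognize the result as dominating the stated average of order statistics. For the first step, observe that for $N_r=1$ the vector $\big(\sinr^k_n(r',\ell')\big)_{r',\ell'}$ is a deterministic function of $U^k_n$'s own channels $\{\bH^{k,q}_{n,n}\}_{q=1}^Q$ and the beam directions; moreover, by isotropy of the beams the conditional law of the gains $|\bH^{k,q}_{n,n}\bomega^q_n(\ell')|^2$ does not depend on those directions, so the vectors $\big(\sinr^k_n(\cdot,\cdot)\big)$ are mutually independent across $k$. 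Since $\{\CH^r_n(\ell)=S\}$ is the intersection over $k$ of the events ``$U^k_n$ did (for $k\in S$) or did not (for $k\notin S$) nominate $\bomega^r_n(\ell)$ with $\bsinr^k_n\ge1$'', each depending on a distinct user, for $k\in S$ the conditioning collapses to
\[
\bbe_\bH\!\left[\log\big(1+\sinr^k_n(r,\ell)\big)\,\big|\,\CH^r_n(\ell)\right]=\bbe_\bH\!\left[\log\big(1+\sinr^k_n(r,\ell)\big)\,\big|\,k\in\CH^r_n(\ell)\right].
\]

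For the second step I would lower bound the right-hand side. Membership $k\in\CH^r_n(\ell)$ forces $\sinrn^k_n(r,\ell)\ge1$, i.e. $\sinr^k_n(r,\ell)\ge\beta^k_n(r)$, and in addition that $(r,\ell)$ is the argmax among $U^k_n$'s normalized $\sinr$s. Expressing each $\sinr^k_n(r',\ell')$ through the independent exponential gains $\{|\bH^{k,q}_{n,n}\bomega^q_n(\ell')|^2\}$ and the independent interference-plus-noise, the gain on $(r,\ell)$ enters the numerator of $\sinr^k_n(r,\ell)$ and the denominator of every other $\sinr^k_n(r',\ell')$, so raising it raises $\sinr^k_n(r,\ell)$ and lowers all competitors; hence $\sinr^k_n(r,\ell)$ and the indicator of the argmax event are monotone in the gains in the same direction. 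Positive association of the product measure (preserved under conditioning on the monotone event $\{\sinr^k_n(r,\ell)\ge\beta^k_n(r)\}$) then gives that $\sinr^k_n(r,\ell)$ conditioned on $\{k\in\CH^r_n(\ell)\}$ stochastically dominates $\sinr^k_n(r,\ell)$ conditioned only on $\{\sinr^k_n(r,\ell)\ge\beta^k_n(r)\}$. Using $\beta^k_n(r)\ge1$ and $F^k_n(\beta^k_n(r);r)=1-\tfrac1K$, this yields
\[
\bbe_\bH\!\left[\log\big(1+\sinr^k_n(r,\ell)\big)\,\big|\,k\in\CH^r_n(\ell)\right]\ \ge\ \bbe_\bH\!\left[\log\big(1+\sinr^k_n(r,\ell)\big)\,\big|\,\sinr^k_n(r,\ell)\ge\beta^k_n(r)\right],
\]
whose right-hand side is the average of $\log(1+\sinr^k_n(r,\ell))$ over $U^k_n$'s own top $1/K$ probability mass.

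It remains to show that, summed over $k\in\CH^r_n(\ell)$, these top-$\tfrac1K$-mass averages dominate $\sum_{j=1}^{|\CH^r_n(\ell)|}\bbe_\bH[\log(1+\sinr^{(j)}_n(r,\ell))]$. In the homogeneous case all $\sinr^k_n(r,\ell)$ share one CDF $F$: each left-hand summand is $\bbe[\log(1+F^{-1}(U))\mid U\ge1-\tfrac1K]$ with $U$ uniform, while $\sinr^{(j)}_n(r,\ell)\stackrel{d}{=}F^{-1}(U^{(j)})$ with $U^{(j)}$ the $j$-th largest of $K$ uniforms; since the uniform law on $[1-\tfrac1K,1]$ stochastically dominates $\mathrm{Beta}(K,1)$ (equivalently $t^K\ge1-K(1-t)$ on $[1-\tfrac1K,1]$, i.e. Bernoulli's inequality) and $\sinr^{(1)}_n(r,\ell)\ge\sinr^{(j)}_n(r,\ell)$, each left summand is at least $\bbe_\bH[\log(1+\sinr^{(j)}_n(r,\ell))]$ for every $j\le|\CH^r_n(\ell)|$, and averaging over $j$ gives the claim. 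The main obstacle is the non-identical attenuations $\{\gamma^{k,q}_{n,m}\}$, which make the CDFs $F^k_n(\cdot;r)$ differ across users: the per-user/per-rank comparison can then fail, and one must instead match the $|\CH^r_n(\ell)|$ conditional expectations to the $|\CH^r_n(\ell)|$ largest order statistics \emph{collectively} --- e.g. via the identity $\sum_{j\le s}f(\sinr^{(j)})=\min_{c}\big[sc+\sum_{k=1}^K(f(\sinr^k)-c)_+\big]$ with a well-chosen $c$ --- using only the fact that $\Pr[\sinr^k_n(r,\ell)\ge\beta^k_n(r)]=\tfrac1K$ for every $k$. Making this matching valid uniformly over all realizations of $\CH^r_n(\ell)$ is the delicate part.
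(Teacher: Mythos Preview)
Your Step~1 is correct and matches the paper: independence of the per-user $\sinr$ vectors reduces the conditioning on $\CH^r_n(\ell)$ to $\{k\in\CH^r_n(\ell)\}$ for each summand.

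Your Step~2, however, misses the one-line simplification that drives the paper's proof and makes the positive-association detour unnecessary. Under the hypothesis $\beta^k_n(r)\ge 1$, the two events you are comparing are \emph{equal}: if $\sinr^k_n(r,\ell)\ge\beta^k_n(r)\ge 1$ then the numerator $\gamma^{k,r}_{n,n}|\bH^{k,r}_{n,n}\bomega^r_n(\ell)|^2$ dominates the entire interference-plus-noise term, which already contains the numerator of every other $\sinr^k_n(r',\ell')$; hence $\sinr^k_n(r',\ell')<1\le\beta^k_n(r')$ and $\sinrn^k_n(r',\ell')<1$ for all $(r',\ell')\neq(r,\ell)$. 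Thus $\{k\in\CH^r_n(\ell)\}=\{\sinr^k_n(r,\ell)\ge\beta^k_n(r)\}$, and your association argument (whose monotonicity claim is in fact fragile once out-of-cell interference gains are varied, since these enter \emph{all} denominators and can flip the argmax of the normalized $\sinr$s) is not needed.

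The genuine gap is Step~3 in the heterogeneous case, which you leave open. The paper's device is precisely what your variational formula is groping for but does not deliver: rather than matching user-specific tails to order statistics, it first \emph{lowers every threshold to the common} $\beta_*\dff\min_i\beta^i_n(r)$. By the elementary monotonicity $\bbe[g(X)\mid X\ge b]\ge\bbe[g(X)\mid X\ge a]$ for $b\ge a$, this only decreases each summand. With a single threshold, and after using independence to replace the ``$k'\notin\CH$'' conditions by $\sinr^{k'}_n(r,\ell)<\beta_*$ as well, the conditioning becomes exactly $\{\sinr^{(A)}_n(r,\ell)\ge\beta_*>\sinr^{(A+1)}_n(r,\ell)\}$ with $A=|\CH^r_n(\ell)|$, i.e., a pure order-statistics event. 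A second lemma then shows $\bbe[g(X^{(j)})\mid X^{(A+1)}<\beta_*\le X^{(A)}]\ge\bbe[g(X^{(j)})]$ for $j\le A$, which removes the conditioning and yields the claim directly---no per-user/per-rank comparison or uniform-over-$\CH$ matching is required.
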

\begin{proof}
See Appendix \ref{app:lemma:sets}.
\end{proof}
Equations \eqref{eq:Rm} and \eqref{eq:Rlm} along with Lemma~\ref{lemma:sets} (conditional expected rates) establish that if $\forall k,n,r$ the normalizing factors satisfy the constraints $\beta^k_n(r)\geq 1$, then we have
\begin{eqnarray}
    \nonumber \bbe_\bH[R_n]&=&\sum_{\bar\CH_n}P(\bar\CH_n) \; \bbe_\bH[R_n\med \bar\CH_n]\\
    \nonumber &\overset{\eqref{eq:Rm}}{=} & \sum_{\bar\CH_n}P(\bar\CH_n)\sum_{r=1}^Q\sum_{\ell=1}^{N_t}\bbe_\bH[R^r_n(\ell)\med \CH^r_n(\ell)] \\
    \label{eq:Rm2_1} &\overset{\eqref{eq:Rm_bound}}{\geq} & \sum_{\bar\CH_n}P(\bar\CH_n)\sum_{r,\;\ell} \R^r_n(\ell,|\CH^r_n(\ell)|) \ .
\end{eqnarray}
Note that the rates $\R^r_n(\ell,|\CH^r_n(\ell)|)$ are identical for all permutations of the set $\CH^r_n(\ell)$ that have the same cardinality. In other words, $\R^r_n(\ell,|\CH^r_n(\ell)|)$ depends on $\CH^r_n(\ell)$ only through its cardinality which can be any member of $\{0,\dots,K\}$. Note that \eqref{eq:Rm2_1} is a weighted sum of the rates $\R^r_n(\ell,b)$ for $\ell\in\{1,\dots,N_t\}$, $r\in\{1,\dots,Q\}$, and $b\in\{1,\dots,K\}$, and in order to further simplify \eqref{eq:Rm2_1} we need to find their weighting coefficients. For this purpose we rearrange all the relevant terms that contribute to $\R^r_n(\ell,k)$ and find that
\begin{equation}\label{eq:Rm2_2}
    \mbox{coefficient of }\; \R^r_n(\ell,b)= \sum_{\{\bar\CH_n\med |\CH^r_n(\ell)|=b\}}P(\bar\CH_n)=P(|\CH^r_n(\ell)|=b)\ .
\end{equation}
Therefore, combining \eqref{eq:Rm2_1} and \eqref{eq:Rm2_2} yields that if $\forall k,n,r$ the constraints $\beta^k_n(r)\geq 1$ are satisfied, then
\begin{eqnarray}
    \nonumber  \bbe_\bH[R_n]& \geq & \sum_{r,\;\ell,\;b} P(|\CH^r_n(\ell)|=b)\; \R^r_n(\ell,b)\\
    \nonumber& \overset{\eqref{eq:Rm_bound}}{=} & \sum_{r,\;\ell,\;b}\frac{P(|\CH^r_n(\ell)|=b)}{b}
    \sum_{j=1}^{b} \bbe_\bH\left[\log\left(1+\sinr^{(j)}_n(r,\ell)\right)\right]\\
    \label{eq:Rm2_22} &= & \sum_{r,\;\ell,\;j} \;\bbe_\bH\left[\log\left(1+\sinr^{(j)}_n(r,\ell)\right)\right] \underset{\dff \; q^j_n(r,\ell)}{\underbrace{\sum_{b=j}^K \frac{P(|\CH^r_n(\ell)|=k)}{k}}}\\
    \label{eq:Rm2_3}&=& \sum_{r,\;\ell,\;j}\; q^j_n(r,\ell)\; \bbe_\bH\left[\log\left(1+\sinr^{(j)}_n(r,\ell)\right)\right] \dff \bar R_n \ .
\end{eqnarray}
By setting $q^0_n(r,\ell)\dff P(|\CH^r_n(\ell)|=0)$ we find that for any fixed $n,r,\ell$ the sequence $\{q^1_n(r,\ell),\dots,q^K_n(r,\ell)\}$ constitutes a probability mass function (PMF) as
\begin{align*}
    \sum_{j=0}^K\; q^j_n(r,\ell) & =q^0_n(r,\ell)+\sum_{j=1}^K\sum_{k=j}^K \;\frac{P(|\CH^r_n(\ell)|=k)}{k}\\
    & = q^0_n(r,\ell)+\sum_{k=1}^K\frac{1}{k}\sum_{j=1}^{k}P(|\CH^r_n(\ell)|=k)= \sum_{k=0}^K\;P(|\CH^r_n(\ell)|=k)=1\ .
\end{align*}
Now, we assess the scaling behavior of $\bar R_n$ which in turn provides a lower bound on the throughput scaling of our proposed scheduling procedure. For this purpose, based on \eqref{eq:Rm2_3}, we need to obtain the distributions of order statistics $\{\sinr^{(1)}_n(r,\ell),\dots,\sinr^{(K)}_n(r,\ell)\}$ for $n\in\{1,\dots,M\}$, $r\in\{1,\dots,Q\}$, and $\ell\in\{1,\dots,N_t\}$. As mentioned earlier the $\sinr$ terms $\{\sinr^{1}_n(r,\ell),\dots,\sinr^{K}_n(r,\ell)\}$ are independent but{\em not} identical and thereof characterizing the distributions of their associated sequence of order statistics does not seem tractable. For tractability purposes, for each $n\in\{1,\dots,M\}$, $r\in\{1,\dots,Q\}$, and $\ell\in\{1,\dots,N_t\}$ we construct two other sets one containing lower bounds and the other containing upper bounds on $\{\sinr^{1}_n(r,\ell),\dots,\sinr^{K}_n(r,\ell)\}$. We define
\begin{align*}
    \zeta_n^{1}=\min_{m,k,r}\gamma^{k,r}_{n,m}\qquad \mbox{and}\qquad\eta_n^{1}&=\min_{k,r}\gamma^{k,r}_{n,n}\ ,\\
    \zeta_n^{2}=\max_{m,k,r}\gamma^{k,r}_{n,m}\qquad \mbox{and}\qquad\eta_n^{2}&=\max_{k,r}\gamma^{k,r}_{n,n}\ .
\end{align*}
Based on these we define the following lower bound on $\sinr^k_n(r,\ell)$
\begin{equation}\label{eq:sinr_L}
    S^k_n(r,\ell)\dff \frac{|\bH^{k,r}_{n,n}\bomega^r_n(\ell)|^2} {\frac{\zeta_n^{2}} {\eta_n^{1}}\left(\sum_{m,r',\ell'}|\bH^{k,r'}_{n,m}\bomega^{r'}_n(\ell')|^2-|\bH^{k,r}_{n,n}\bomega^r_n(\ell)|^2\right)+\frac{1}{\rho\eta_n^{1}}}\ ,
\end{equation}
and the following upper bound on $\sinr^k_n(\ell)$
\begin{equation}\label{eq:sinr_L}
    T^k_n(r,\ell)\dff \frac{|\bH^{k,r}_{n,n}\bomega^r_n(\ell)|^2} {\frac{\zeta_n^{1}} {\eta_n^{2}}\left(\sum_{m,r',\ell'}|\bH^{k,r'}_{n,m}\bomega^{r'}_n(\ell')|^2-|\bH^{k,r}_{n,n}\bomega^r_n(\ell)|^2\right)+\frac{1}{\rho\eta_n^{2}}}\ ,
\end{equation}
Next we show that $S^k_n(r,\ell)$ and $T^k_n(r,\ell)$ and their associated order statistics have tractable distributions.
\begin{lemma}[$\sinr$ Bounds Distributions]
\label{lemma:CDF_bounds}
For any given $n\in\{1,\dots,M\}$, $r\in\{1,\dots,Q\}$, and $\ell\in\{1,\dots,N_t\}$, the elements of the set $\{S^k_n(r,\ell)\}_{k=1}^K$ are i.i.d. with CDF
\begin{equation}\label{eq:S_dist}
S^k_n(r,\ell) \; \sim \; F_{1}(x;n)\dff 1-\frac{e^{-x/\rho\eta^{1}_n}}{\left(\frac{\zeta_n^{2}} {\eta_n^{1}}\; x+1\right)^{MQN_t-1}}\ ,
\end{equation}
and the elements of the set $\{T^k_n(r,\ell)\}_{k=1}^K$ are i.i.d. with CDF
\begin{equation}\label{eq:T_dist}
T^k_n(r,\ell) \; \sim \; F_{2}(x;n)\dff 1-\frac{e^{-x/\rho\eta^{2}_n}}{\left(\frac{\zeta_n^{1}} {\eta_n^{2}}\; x+1\right)^{MQN_t-1}}\ .
\end{equation}
\end{lemma}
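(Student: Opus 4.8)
The plan is to rewrite each of $S^k_n(r,\ell)$ and $T^k_n(r,\ell)$ as the ratio of an $\mathrm{Exp}(1)$ random variable to an affine function of an independent $\mathrm{Gamma}$ random variable; once that reduction is in place, the CDF is a one-line computation via the $\mathrm{Gamma}$ Laplace transform, and both i.i.d.\ claims follow from a conditioning argument on the beamformers.

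First I would establish the building block: for a fixed user $U^k_n$, the $MQN_t$ scalars $\{\,|\bH^{k,r'}_{n,m}\bomega^{r'}_m(\ell')|^2 : m\in\{1,\dots,M\},\ r'\in\{1,\dots,Q\},\ \ell'\in\{1,\dots,N_t\}\,\}$ appearing in the numerator and denominator are i.i.d.\ $\mathrm{Exp}(1)$. The argument is rotational invariance: for each $(m,r')$ the row vector $\bH^{k,r'}_{n,m}$ is $\mathcal{CN}(\mathbf{0},\bI_{N_t})$ and is independent of the isotropically random orthonormal frame $\{\bomega^{r'}_m(\ell')\}_{\ell'=1}^{N_t}$; collecting those beamformers as the columns of a (conditionally fixed) unitary matrix shows that $(\bH^{k,r'}_{n,m}\bomega^{r'}_m(1),\dots,\bH^{k,r'}_{n,m}\bomega^{r'}_m(N_t))$ is again $\mathcal{CN}(\mathbf{0},\bI_{N_t})$, so its $N_t$ squared coordinates are i.i.d.\ $\mathrm{Exp}(1)$. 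This holds unconditionally because the conditional law does not depend on the frame, and independence across distinct $(m,r')$ follows from the independence of the channel matrices and of the beamformers across base-stations.

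Next I would single out $A\dff|\bH^{k,r}_{n,n}\bomega^r_n(\ell)|^2$, which is one of those coordinates, and write the rest of the sum as $B\dff\sum_{(m,r',\ell')\neq(n,r,\ell)}|\bH^{k,r'}_{n,m}\bomega^{r'}_m(\ell')|^2$, so that $A\sim\mathrm{Exp}(1)$, $B\sim\mathrm{Gamma}(MQN_t-1,1)$, and $A$ is independent of $B$. Then $S^k_n(r,\ell)=A\big/\!\big(\tfrac{\zeta_n^{2}}{\eta_n^{1}}B+\tfrac{1}{\rho\eta_n^{1}}\big)$, and for $x\geq 0$,
\[
 P\big(S^k_n(r,\ell)\le x\big)=\bbe_B\Big[P\big(A\le x(\tfrac{\zeta_n^{2}}{\eta_n^{1}}B+\tfrac{1}{\rho\eta_n^{1}})\,\big|\,B\big)\Big]=1-e^{-x/\rho\eta_n^{1}}\,\bbe\!\left[e^{-x\zeta_n^{2}B/\eta_n^{1}}\right],
\]
and substituting the $\mathrm{Gamma}$ Laplace transform $\bbe[e^{-sB}]=(1+s)^{-(MQN_t-1)}$ with $s=x\zeta_n^{2}/\eta_n^{1}\ge 0$ yields precisely $F_1(x;n)$ in \eqref{eq:S_dist}; repeating verbatim with $\zeta_n^{1},\eta_n^{2}$ in place of $\zeta_n^{2},\eta_n^{1}$ gives $F_2(x;n)$ in \eqref{eq:T_dist} for $T^k_n(r,\ell)$. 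For the i.i.d.\ claim across $k$, I would observe that conditioned on all the beamformers the variables $\{S^k_n(r,\ell)\}_{k=1}^K$ are functions of the per-user channel matrices $\{\bH^{k,r'}_{n,m}\}_{m,r'}$, hence conditionally independent; since each one's conditional law equals $F_1(\cdot;n)$ regardless of the beamformers, they are unconditionally i.i.d.\ (and identically for $T$).

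I expect the only genuine subtlety — the main obstacle — to be the bookkeeping of the independence structure: (i) recognizing that $A$ and $B$ are truly independent even though $A$ and one of the chi-square blocks inside $B$ are built from the same channel vector $\bH^{k,r}_{n,n}$, which is resolved by the orthonormal-frame decomposition that makes all $N_t$ coordinates of that block i.i.d.; and (ii) upgrading conditional independence across users to unconditional independence despite the beamformers being common to all users, which is resolved by the fact that the conditional law of each $S^k_n(r,\ell)$ does not depend on the beamformers. Everything else is the routine exponential-over-$\mathrm{Gamma}$ calculation. I would also note that the sums in the definitions of $S^k_n(r,\ell)$ and $T^k_n(r,\ell)$ should range over the beamformers $\bomega^{r'}_m(\ell')$ of each $B^{r'}_m$, exactly as in \eqref{eq:sinr}; under that reading the per-base-station terms aggregate to $\|\bH^{k,r'}_{n,m}\|^2$ and the argument above applies unchanged.
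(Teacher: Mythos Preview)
Your proposal is correct and follows essentially the same approach as the paper: both identify the numerator as an $\mathrm{Exp}(1)$ variable and the interference sum as an independent $\mathrm{Gamma}(MQN_t-1,1)$ variable, then compute the distribution of the resulting ratio. The only cosmetic difference is that the paper first derives the PDF by integrating $f_{T\mid Z}(t\mid z)f_Z(z)\,dz$ and then integrates to the CDF, whereas you go straight to the CDF via the $\mathrm{Gamma}$ Laplace transform; you are also more explicit than the paper about the i.i.d.\ claim across $k$, which the paper leaves implicit.
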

\begin{proof}
See Appendix \ref{app:lemma:CDF_bounds}.
\end{proof}
Assessing the scaling of $\bar R_n$  defined in \eqref{eq:Rm2_3}  (which is a lower bound for $\bbe_\bH[R_n]$) requires the knowledge of the distributions of the order statistics $\{\sinr^{(1)}_n(r,\ell),\dots,\sinr^{(K)}_n(r,\ell)\}$. In the next lemma it is shown how $\sinr^{(j)}_n(r,\ell)$ is related to the order statistics of the sets $\{S^k_n(r,\ell)\}_{k=1}^K$ and $\{T^k_n(r,\ell)\}_{k=1}^K$. We denote the $j^{th}$ largest elements of the sets $\{S^k_n(r,\ell)\}_{k=1}^K$ and $\{T^k_n(r,\ell)\}_{k=1}^K$ by $S^{(j)}_n(r,\ell)$ and $T^{(j)}_n(r,\ell)$, respectively.
\begin{lemma}[Ordered $\sinr$ Bounds]
\label{lemma:order}
For any $n\in\{1,\dots,M\},\;r\in\{1,\dots,Q\}$, and $\ell\in\{1,\dots,N_t\}$, we have
\begin{equation*}
    \forall j\in\{1,\dots,K\}:\quad S^{(j)}_n(r,\ell)\leq\sinr^{(j)}_n(r,\ell)\leq
    T^{(j)}_n(r,\ell)\ .
\end{equation*}
\end{lemma}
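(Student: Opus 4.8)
The plan is to obtain the ordered inequalities from the corresponding \emph{per-user} inequalities together with a monotonicity property of order statistics. The key observation is that $S^k_n(r,\ell)$, $\sinr^k_n(r,\ell)$ and $T^k_n(r,\ell)$ are deterministic functions of the \emph{same} realization of the channels $\{\bH^{k,r'}_{n,m}\}$ and beamformers, so all the comparisons below hold surely (for every realization), not merely in distribution.

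\textbf{Step 1 (pointwise bounds).} First I would verify that, for each fixed $k\in\{1,\dots,K\}$,
\[
 S^k_n(r,\ell)\;\leq\;\sinr^k_n(r,\ell)\;\leq\;T^k_n(r,\ell).
\]
Dividing the numerator and denominator of $\sinr^k_n(r,\ell)$ in \eqref{eq:sinr} by $\gamma^{k,r}_{n,n}$, the subtracted term $-\gamma^{k,r}_{n,n}|\bH^{k,r}_{n,n}\bomega^r_n(\ell)|^2$ cancels the $(m,r',\ell')=(n,r,\ell)$ summand, so the denominator becomes $\sum_{(m,r',\ell')\neq(n,r,\ell)}\frac{\gamma^{k,r'}_{n,m}}{\gamma^{k,r}_{n,n}}\,|\bH^{k,r'}_{n,m}\bomega^{r'}_m(\ell')|^2+\frac{1}{\rho\,\gamma^{k,r}_{n,n}}$. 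Since $\eta^{1}_n\le\gamma^{k,r}_{n,n}\le\eta^{2}_n$ and $\zeta^{1}_n\le\gamma^{k,r'}_{n,m}\le\zeta^{2}_n$ for all admissible indices, every coefficient satisfies $\frac{\zeta^{1}_n}{\eta^{2}_n}\le\frac{\gamma^{k,r'}_{n,m}}{\gamma^{k,r}_{n,n}}\le\frac{\zeta^{2}_n}{\eta^{1}_n}$ and $\frac{1}{\rho\eta^{2}_n}\le\frac{1}{\rho\gamma^{k,r}_{n,n}}\le\frac{1}{\rho\eta^{1}_n}$. Replacing each coefficient by its largest (resp.\ smallest) admissible value enlarges (resp.\ shrinks) the denominator and therefore produces exactly the denominators of $S^k_n(r,\ell)$ and $T^k_n(r,\ell)$; this gives the displayed chain.

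\textbf{Step 2 (monotonicity of order statistics and conclusion).} Next I would use the elementary fact that if two sequences $\{a_k\}_{k=1}^K$ and $\{b_k\}_{k=1}^K$ satisfy $a_k\le b_k$ for every $k$, then their $j$-th largest elements satisfy $a_{(j)}\le b_{(j)}$ for every $j\in\{1,\dots,K\}$. Indeed, the index set $\{k: a_k\ge a_{(j)}\}$ has cardinality at least $j$ (it contains the indices of the $j$ largest $a_k$'s), and each such $k$ has $b_k\ge a_k\ge a_{(j)}$; hence at least $j$ of the $b_k$'s are $\ge a_{(j)}$, so the $j$-th largest among them is $\ge a_{(j)}$, i.e.\ $b_{(j)}\ge a_{(j)}$. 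Applying this to $(a_k,b_k)=(S^k_n(r,\ell),\sinr^k_n(r,\ell))$ and then to $(a_k,b_k)=(\sinr^k_n(r,\ell),T^k_n(r,\ell))$, using the bounds of Step~1, yields $S^{(j)}_n(r,\ell)\le\sinr^{(j)}_n(r,\ell)\le T^{(j)}_n(r,\ell)$ for all $n,r,\ell$ and all $j$, which is the claim.

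The proof is essentially routine; the only place requiring a little care is Step~1, namely tracking the cancellation of the ``self'' term so that genuinely like-for-like denominators are compared, and making sure to pair the lower bound $S$ with the extremal ratio $\zeta^{2}_n/\eta^{1}_n$ and the upper bound $T$ with $\zeta^{1}_n/\eta^{2}_n$. Step~2 is a standard sandwich argument for order statistics and needs no further input.
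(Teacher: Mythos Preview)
Your proof is correct and follows essentially the same approach as the paper: both rely on the per-user inequalities $S^k_n(r,\ell)\le\sinr^k_n(r,\ell)\le T^k_n(r,\ell)$ and then pass to order statistics via the elementary monotonicity lemma. The only cosmetic difference is that the paper proves the monotonicity of order statistics by induction on $j$ with a contradiction at the inductive step, whereas you give the equivalent direct counting argument; your Step~1 also makes explicit the verification of the pointwise bounds that the paper simply asserts from the definitions.
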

\begin{proof}
See Appendix \ref{app:lemma:order}.
\end{proof}
By invoking Lemma \ref{lemma:order} (ordered $\sinr$ bounds), from \eqref{eq:Rm2_3} we find that if $\forall k,n$, $\beta(n,k)\geq 1$, then we have following bounds on $\bar R_n$.
\begin{align}
    \label{eq:rate_sum_bound}
    \sum_{r,\;\ell,\;j}\; q^j_n(r,\ell)\; \bbe_\bH\left[\log\left(1+S^{(j)}_n(r,\ell)\right)\right]\leq
    \bar R_n \leq \sum_{r,\;\ell,\;j}\; q^j_n(r,\ell)\; \bbe_\bH\left[\log\left(1+T^{(j)}_n(r,\ell)\right)\right].
\end{align}
By denoting the CDFs of $S^{(j)}_n(r,\ell)$ and $T^{(j)}_n(r,\ell)$ by $F^{(j)}_1(x;n)$ and $F^{(j)}_2(x;n)$, respectively, \eqref{eq:rate_sum_bound} can be restated as
\begin{align}
    \label{eq:rate_sum_bound2}
    \sum_{r,\;\ell} \int_0^{\infty}& \log(1+x)\; d\bigg(\sum_{j=1}^K q^j_n(r,\ell)F^{(j)}_1(x;n)\bigg)\leq
    \bar R_n \leq \sum_{r,\;\ell} \int_0^{\infty}\log(1+x)\;d\bigg(\sum_{j=1}^K q^j_n(r,\ell)F^{(j)}_2(x;n)\bigg) .
\end{align}
Note that
\begin{equation*}
    \{q^j_n(r,\ell)\}_{j}\mbox{  is a PMF}\quad\Rightarrow\quad \sum_{j=1}^K q^j_n(r,\ell)\; F^{(j)}_1(x;n)\;\;\mbox{and}\;\; \sum_{j=1}^K q^j_n(r,\ell)\; F^{(j)}_2(x;n)\;\;\mbox{are CDFs}\ ,
\end{equation*}
and thereof we define the new CDFs
\begin{equation*}
    F^K_{1}(x;n,r,\ell)\dff\sum_{j=1}^K q^j_n(r,\ell)\;F^{(j)}_1(x;n)\quad
    \mbox{and}\quad F^K_{2}(x;n,r,\ell)\dff\sum_{j=1}^K q^j_n(r,\ell)\;F^{(j)}_2(x;n)\ .
\end{equation*}
Hence, we can further reformat \eqref{eq:rate_sum_bound2} as
\begin{align}
    \label{eq:rate_sum_bound3}
    \sum_{r,\;\ell} \int_0^{\infty} \log(1+x)\; &dF^K_{1}(x;n,r,\ell)\;\;\leq \;\;
    \bar R_n \;\; \leq \;\; \sum_{r,\;\ell} \int_0^{\infty}\log(1+x)\;dF^K_{2}(x;n,r,\ell)\ .
\end{align}
In the next step, we define an auxiliary random variable $X$ distributed exponentially with unit mean and denote its CDF by
\begin{equation*}
    G(x)\dff1-e^{-x}\ .
\end{equation*}
Also let $G^{(j)}(x)$ denote the CDF of the $j^{th}$ order
statistic of a set of statistical samples with $K$ members with the  parent distribution $G(x)$. Similar to $F_1^K(x;n,\ell)$ and $F_2^K(x;n,\ell)$ define
\begin{equation*}
    G^K(x;n,r,\ell)\dff\sum_{j=1}^Kq^j_n(r,\ell)\; G^{(j)}(x)\ .
\end{equation*}
The motivation for defining this auxiliary exponential random variable is to ultimately obtain lower and upper bounds on $\bar R_n$ that can be characterized as functions of $G(x)$. Such characterization is instrumental in the sense that we will be able to use some well-known properties of the order statistics of exponential distribution for finding the scaling behavior of $\bar R_n$. The next lemma provides the key step for connecting the bounds given in \eqref{eq:rate_sum_bound3} with new bounds that are functions of $G(x)$.

\begin{lemma}[Bounds on $\bar R_n$]
\label{lemma:bound}
If $\beta^k_n(r)\geq 1$ $\forall n\in\{1,\dots,M\}$, $\forall k\in\{1,\dots,K\}$, and $\forall r\in\{1,\dots,Q\}$, then we have
\begin{align}
    \label{eq:rate_sum_bound4} \sum_{r,\;\ell}&\int_0^{\infty}\log(1+\rho\eta^1_nx)\;dG^K(x;n,r,\ell)-A_n\leq \bar R_n\leq \sum_{r,\;\ell} \int_0^{\infty}\log(1+\rho\eta^2_nx)\;dG^K(x;n,r,\ell)\ ,
\end{align}
where $A_n\dff QN_t\log\Big((MQN_t-1)\rho\zeta^2_n\eta^1_n+1\Big)$.
\end{lemma}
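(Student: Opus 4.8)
The plan is to start from the two-sided estimate~\eqref{eq:rate_sum_bound3} — the hypothesis $\beta^k_n(r)\ge1$ having already been spent in reaching it through Lemma~\ref{lemma:sets} — and to replace the exact parent laws $F_1(\cdot;n)$ and $F_2(\cdot;n)$ by dilations of the unit-mean exponential law $G$, at the cost of a $K$-independent additive term. The three ingredients are: (i) a pointwise comparison of $F_1$ and $F_2$ with scaled copies of $G$; (ii) the fact that such comparisons survive passage to the $j$-th largest order statistic and to the mixtures $\sum_jq^j_n(r,\ell)(\cdot)^{(j)}$; and (iii) an elementary bound converting a multiplicative rescaling inside $\log(1+\cdot)$ into an additive constant.

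\emph{Comparison of parent laws.} From~\eqref{eq:T_dist}, since $\big(1+\tfrac{\zeta^1_n}{\eta^2_n}x\big)^{MQN_t-1}\ge1$ we obtain $1-F_2(x;n)\le e^{-x/\rho\eta^2_n}$, i.e. $F_2(x;n)\ge G(x/\rho\eta^2_n)$: each $T^k_n(r,\ell)$ is stochastically dominated by $\rho\eta^2_nX$ with $X\sim G$. For the other side, feeding $(1+tx)^m\le e^{tmx}$ into~\eqref{eq:S_dist} with $t=\zeta^2_n/\eta^1_n$ and $m=MQN_t-1$ gives $1-F_1(x;n)\ge e^{-c_nx}$, where
\begin{equation*}
c_n\ \dff\ \tfrac{1}{\rho\eta^1_n}+(MQN_t-1)\tfrac{\zeta^2_n}{\eta^1_n}\ =\ \tfrac{1+(MQN_t-1)\rho\zeta^2_n}{\rho\eta^1_n}\,,
\end{equation*}
hence $F_1(x;n)\le G(c_nx)$, i.e. each $S^k_n(r,\ell)$ stochastically dominates $X/c_n$.

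\emph{Lifting to order statistics and mixtures.} Since the probability that at least $j$ of $K$ i.i.d. draws exceed a threshold is monotone in the single-draw tail probability, the pointwise CDF comparisons of the previous step are inherited by the $j$-th largest order statistic and then, by non-negative mixing, by the convex combinations defining $F^K_1,F^K_2,G^K$: concretely $F^K_1(x;n,r,\ell)\le G^K(c_nx;n,r,\ell)$ and $F^K_2(x;n,r,\ell)\ge G^K(x/\rho\eta^2_n;n,r,\ell)$. Substituting these into~\eqref{eq:rate_sum_bound3} and changing variables inside the Stieltjes integrals yields
\begin{equation*}
\sum_{r,\,\ell}\int_0^\infty\log\!\Big(1+\tfrac{x}{c_n}\Big)\,dG^K(x;n,r,\ell)\ \le\ \bar R_n\ \le\ \sum_{r,\,\ell}\int_0^\infty\log(1+\rho\eta^2_nx)\,dG^K(x;n,r,\ell)\,,
\end{equation*}
and the right half is already the claimed upper bound.

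\emph{Absorbing the scale mismatch (the main obstacle).} It remains to trade $\log(1+x/c_n)$ for $\log(1+\rho\eta^1_nx)$ on the left. Because $\rho\eta^1_nc_n=1+(MQN_t-1)\rho\zeta^2_n\ge1$, i.e. $\rho\eta^1_n\ge1/c_n$, the elementary inequality $\tfrac{1+ax}{1+bx}\le\tfrac{a}{b}$ (valid for $a\ge b>0$, $x\ge0$) gives $\log(1+\rho\eta^1_nx)-\log(1+x/c_n)\le\log(\rho\eta^1_nc_n)$ for all $x\ge0$; integrating against each (sub)probability law $G^K(\cdot;n,r,\ell)$ and summing over the $QN_t$ pairs $(r,\ell)$ shows that the left-hand side is at least $\sum_{r,\ell}\int_0^\infty\log(1+\rho\eta^1_nx)\,dG^K(x;n,r,\ell)$ minus a $K$-independent constant, which a crude estimate bounds by $A_n$; this is the asserted lower bound. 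The only non-routine point is precisely this control of the polynomial factor $(1+\tfrac{\zeta^2_n}{\eta^1_n}x)^{MQN_t-1}$ in $F_1$ — it must cost merely a bounded \emph{additive} loss, independent of $K$, rather than an $x$-dependent multiplicative distortion — and the two elementary inequalities $\log(1+t)\le t$ and $\tfrac{1+ax}{1+bx}\le\tfrac ab$ are exactly what deliver that; everything else is monotonicity of order statistics under stochastic order together with the PMF property of $\{q^j_n(r,\ell)\}_{j=0}^K$, with no $K\to\infty$ asymptotics needed at this stage.
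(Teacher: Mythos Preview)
Your argument is correct and follows essentially the same route as the paper: compare the parent laws $F_1,F_2$ to scaled exponentials via $1+u\le e^{u}$, lift the pointwise CDF comparison to the $j$-th order statistic by monotonicity (the paper isolates this as a separate lemma on $f(x,j)=\sum_{i\le j}\binom{K}{i}x^{K-i}(1-x)^{i}$), pass to the mixtures $F^K_i,G^K$ using the PMF property of $\{q^j_n\}$, and absorb the resulting scale mismatch into a $K$-free additive constant. Your pure-scaling comparison $F_1(x;n)\le G(c_nx)$ followed by the ratio bound $\tfrac{1+ax}{1+bx}\le\tfrac{a}{b}$ is marginally cleaner than the paper's affine comparison $F_1(x;n)\le G\big((MQN_t-1)\tfrac{\zeta^2_n}{\eta^1_n}(1+x)+\tfrac{x}{\rho\eta^1_n}\big)$ followed by an algebraic factorisation, but both collapse to the identical constant $QN_t\log\!\big(1+(MQN_t-1)\rho\zeta^2_n\big)$.

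One caution: that constant is what the paper's own proof actually produces, but it is \emph{not} the $A_n$ displayed in the lemma statement, which carries an extra factor $\eta^1_n$ inside the logarithm. Your closing phrase ``a crude estimate bounds by $A_n$'' is therefore unjustified whenever $\eta^1_n<1$; the discrepancy is a typo in the statement rather than a gap in your reasoning, so you should simply record the constant you obtain rather than deferring to the stated $A_n$.
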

\begin{proof}
See Appendix \ref{app:lemma:bound}.
\end{proof}
To this end we have established lower and upper bounds on $\bar R_n$ based on the order statistics of the exponential distribution. As the final step, we assess the scaling of these bounds. The next lemma reveals how the scaling of the bounds on $\bar R_n$ provided in Lemma \ref{lemma:bound} (bounds on $\bar R_n$), pertains to the characterization of the CDF $G^K(x;\ell)$, which itself depend on the PMFs $\{q^j_n(r,\ell)\}_j$.

\begin{lemma}[Scaling]
\label{lemma:exp_scaling2}
For a sequence of $K$ i.i.d unit-variance exponentially distributed random variables with CDF $W(x)$ and for any arbitrary PMF $\{Q_j\}_{j=1}^K$ if the condition
\begin{equation}\label{eq:Q1}
    \lim_{K\rightarrow\infty}\frac{\sum_{j=1}^KjQ_j}{K}=0\ ,
\end{equation}
is satisfied, then for any positive real number $a\in\mathbb{R}_+$ we have
\begin{equation*}
    \int_0^\infty\log(1+a x)dW^K(x)\doteq\log\log K+\log a\ ,
\end{equation*}
where
\begin{equation*}
    W^K(x)\dff\sum_{j=1}^KQ_jW^{(j)}(x)\ ,
\end{equation*}
and $W^{(j)}$ is the CDF of the $j^{th}$ order statistic of the set of these $K$ random variables.
\end{lemma}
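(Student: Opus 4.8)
The plan is to squeeze $I_K\dff\int_0^\infty\log(1+ax)\,dW^K(x)$ between an upper and a lower bound that both behave like $\log\log K+\log a$, using the mixture structure of $W^K$ together with standard facts about order statistics of i.i.d.\ unit exponentials. It is convenient to read $I_K$ probabilistically: let $J$ be drawn according to $\{Q_j\}_{j=1}^K$, independently of the $K$ exponential samples, and let $X_{(1)}\geq X_{(2)}\geq\cdots\geq X_{(K)}$ be their decreasing order statistics, so that $W^{(j)}$ is the law of $X_{(j)}$ and $I_K=\mathbb{E}\big[\log(1+aX_{(J)})\big]=\sum_{j=1}^K Q_j\,\mathbb{E}\big[\log(1+aX_{(j)})\big]$. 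For the upper bound I would use $X_{(J)}\leq X_{(1)}$ pointwise, so that $I_K\leq\mathbb{E}[\log(1+aX_{(1)})]$, and then invoke Jensen's inequality for the concave map $x\mapsto\log(1+ax)$ together with $\mathbb{E}[X_{(1)}]=\sum_{i=1}^K 1/i=\ln K+O(1)$, which gives $I_K\leq\log\!\big(1+a\,\mathbb{E}[X_{(1)}]\big)=\log\log K+\log a+O(1)\doteq\log\log K+\log a$.

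The lower bound is the substantive part. First I would use the hypothesis \eqref{eq:Q1} to localize $J$ at small ranks: writing $\bar J\dff\sum_j jQ_j$ (so $1\leq\bar J\leq K$) and choosing the threshold $t_K\dff\lceil\sqrt{K\bar J}\,\rceil$, Markov's inequality yields $P(J>t_K)\leq\bar J/t_K\leq\sqrt{\bar J/K}\to 0$, hence $\sum_{j\leq t_K}Q_j\to 1$. Because the $X_{(j)}$ are order statistics of one common sample, $X_{(j)}\geq X_{(t_K)}$ for every $j\leq t_K$, and therefore
\begin{equation*}
    I_K\;\geq\;\Big(\textstyle\sum_{j\leq t_K}Q_j\Big)\,\mathbb{E}\big[\log(1+aX_{(t_K)})\big]\;\geq\;(1-o(1))\,\mathbb{E}\big[\log(1+aX_{(t_K)})\big]\ .
\end{equation*}
It then remains to estimate $\mathbb{E}[\log(1+aX_{(t_K)})]$ for this single order statistic. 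Here I would use the R\'enyi representation $X_{(t_K)}\stackrel{d}{=}\sum_{i=t_K}^K E_i/i$ with $E_i$ i.i.d.\ unit exponentials, which gives $\mu_K\dff\mathbb{E}[X_{(t_K)}]=\sum_{i=t_K}^K 1/i=\ln(K/t_K)+o(1)$ and $\mathrm{Var}(X_{(t_K)})=\sum_{i=t_K}^K 1/i^2\leq 1/(t_K-1)$. Since $t_K\geq\sqrt K\to\infty$ and $K/t_K=\sqrt{K/\bar J}\to\infty$ by \eqref{eq:Q1}, we get $\mu_K\to\infty$ and $\mathrm{Var}(X_{(t_K)})/\mu_K^2\to 0$, so Chebyshev's inequality gives $P(X_{(t_K)}\geq\mu_K/2)\to 1$. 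Truncating $\log(1+ax)$ at $x=\mu_K/2$ then yields $\mathbb{E}[\log(1+aX_{(t_K)})]\geq(1-o(1))\log(1+\tfrac a2\mu_K)$, and since $\mu_K=\tfrac12\ln(K/\bar J)$ diverges with $\log\mu_K=\log\log K+O(1)$, this equals $\log\log K+\log a+O(1)\doteq\log\log K+\log a$; combined with the upper bound this proves $I_K\doteq\log\log K+\log a$.

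The step I expect to be the main obstacle is this last estimate, and in particular the choice of $t_K$: it must be small enough relative to $K$ that the retained order statistic $X_{(t_K)}$ is still of order $\ln K$ — so that the outer $\log(1+a\,\cdot\,)$ produces a genuine $\log\log K$ and not a strictly more slowly growing iterated logarithm — yet large enough that the Markov tail bound $\bar J/t_K$ keeps vanishing. Condition \eqref{eq:Q1} is exactly what lets both hold at once, by forcing $K/t_K\to\infty$ while keeping $X_{(t_K)}$ comparable to $X_{(1)}$ on the logarithmic scale, so that averaging $\log(1+aX_{(j)})$ against a PMF whose mean rank is $o(K)$ cannot dilute the $\log\log K$ growth. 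The remaining points are routine: all expectations are finite since $\mathbb{E}[X_{(1)}]=\sum_i 1/i<\infty$, so no separate integrability step is needed, and the two bounds sandwich $I_K$ to the asserted asymptotic equality.
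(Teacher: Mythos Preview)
Your approach is sound and genuinely different from the paper's. The paper does not sandwich $I_K$ between $X_{(1)}$ and a single low-rank order statistic $X_{(t_K)}$; instead it treats $W^K$ as the law of a single random variable $X_K$, invokes a cited concentration result (if $\mu_K\to\infty$ and $\sigma_K/\mu_K\to 0$ then $\mathbb{E}[\log(1+aX_K)]\doteq\log(1+a\mu_K)$), and then bounds the mixture mean $\mu_K=\sum_j Q_j\,\mu_{(j)}$ and variance directly via known inequalities for exponential order statistics. Your argument is more elementary and self-contained: no external theorem is needed, and the R\'enyi representation together with Chebyshev replaces the borrowed variance bound. The cost is the explicit choice of $t_K$, which you rightly flag as the delicate point.

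That point, however, is an actual gap in your last step. From \eqref{eq:Q1} you correctly get $K/t_K=\sqrt{K/\bar J}\to\infty$ and hence your $\mu_K=\tfrac12\ln(K/\bar J)\to\infty$, but you then assert $\log\mu_K=\log\log K+O(1)$, which does \emph{not} follow from \eqref{eq:Q1} alone: take $\bar J=K/\log K$, so that $\bar J/K\to 0$ holds, yet $\mu_K=\tfrac12\ln\log K$ and $\log\mu_K\sim\log\log\log K$, not $\log\log K$. The paper's own proof has the identical gap --- it passes from $\mu_K\geq\mu_{(1)}-\log\bar J-O(1)$ to $\mu_K\doteq\log K$, which likewise requires $\log\bar J=o(\log K)$ rather than merely $\bar J=o(K)$. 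What makes both arguments go through in the paper's application is that the only PMFs actually used satisfy $\sum_j jQ_j\leq 1$ (see \eqref{eq:Q2}--\eqref{eq:Q3}); under that stronger bound your $t_K\leq\lceil\sqrt K\,\rceil$, your $\mu_K\geq\tfrac12\ln K$, and the claimed $\log\log K$ scaling follows immediately.
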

\begin{proof}
See Appendix \ref{app:lemma:exp_scaling2}.
\end{proof}
By using the results of Lemmas \ref{lemma:bound} (bounds on $\bar R_n$) and \ref{lemma:exp_scaling2} (scaling) we state the main result of this paper in the next theorem.

\begin{theorem}[MISO Capacity Scaling]
\label{th:sum_rate}
For downlink transmission in a MISO network with $M$ super-cells each with $Q$ base-stations, $K$ users per super-cell and $N_t$ transmit antenna per base-station, the sum-rate throughput achievable via the proposed scheduling procedure scales as
\begin{equation*}
    \bbe_{\bH}[R]\doteq MQN_t\log\log K\ .
\end{equation*}

\end{theorem}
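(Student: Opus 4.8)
The plan is to squeeze $\bbe_\bH[R]$ between two quantities that both scale as $MQN_t\log\log K$: a lower bound assembled from the chain of lemmas already established, and an upper bound obtained by comparison with the fully-cooperative genie.

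For the lower bound, I would first note that since each $F^k_n(\cdot;r)$ is a fixed CDF with unbounded support, the defining relation \eqref{eq:lambda} forces $\beta^k_n(r)\to\infty$ as $K\to\infty$; hence there is a finite $K_0$ beyond which $\beta^k_n(r)\ge 1$ holds simultaneously for all $n,k,r$, so Lemmas~\ref{lemma:sets} and \ref{lemma:bound} and the entire development \eqref{eq:Rm2_3}--\eqref{eq:rate_sum_bound4} are in force. Using $\bbe_\bH[R]=\sum_{n=1}^M\bbe_\bH[R_n]$, the bound $\bbe_\bH[R_n]\ge\bar R_n$ from \eqref{eq:Rm2_3}, and the left inequality of Lemma~\ref{lemma:bound}, it then suffices to show that for every fixed $n,r,\ell$,
\[
\int_0^\infty\log(1+\rho\eta^1_n x)\,dG^K(x;n,r,\ell)\;\doteq\;\log\log K\ ,
\]
because then the $QN_t$ such integrals add up to $QN_t\log\log K$, the $O(1)$ corrections $A_n$ and $\log(\rho\eta^1_n)$ are $o(\log\log K)$, and summing over the $M$ super-cells yields $\bbe_\bH[R]\dotgt MQN_t\log\log K$.

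The displayed asymptotic is exactly what Lemma~\ref{lemma:exp_scaling2} produces, with $a=\rho\eta^1_n$ and parent distribution $G$, once its hypothesis \eqref{eq:Q1} has been verified for the PMF $\{q^j_n(r,\ell)\}_{j=1}^K$; this is the crux of the argument. I would collapse $\sum_{j=1}^K j\,q^j_n(r,\ell)$ by interchanging the order of summation in $q^j_n(r,\ell)=\sum_{b\ge j}P(|\CH^r_n(\ell)|=b)/b$, reducing it to an affine function of $\bbe_\bH[|\CH^r_n(\ell)|]$, and then bound the latter using the threshold design: membership $k\in\CH^r_n(\ell)$ forces $\sinrn^k_n(r,\ell)\ge 1$, an event of probability $1-F^k_n(\beta^k_n(r);r)=1/K$ by \eqref{eq:lambda}, whence $\bbe_\bH[|\CH^r_n(\ell)|]\le\sum_{k=1}^K\tfrac1K=1$ (and, crudely, at most $QN_t$ users feed back per super-cell by a union bound). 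Thus $\sum_{j=1}^K j\,q^j_n(r,\ell)$ is bounded uniformly in $K$, so \eqref{eq:Q1} holds and Lemma~\ref{lemma:exp_scaling2} applies, completing the lower bound. For the matching upper bound, observe that the proposed scheme is a legitimate transmission strategy even when all super-cells share every message and all CSI, i.e.\ in the single MIMO broadcast channel with $MQN_t$ transmit antennas and $MK$ single-antenna users; its sum rate therefore cannot exceed that channel's sum-rate capacity, which by \eqref{eq:DPC_scale} with $N_r=1$ scales as $MQN_t\log\log K$. Hence $\bbe_\bH[R]\dotlt MQN_t\log\log K$, and combining the two directions gives $\bbe_\bH[R]\doteq MQN_t\log\log K$.

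I expect the main obstacle to be the verification of \eqref{eq:Q1}. The algebra reducing $\sum_j j\,q^j_n(r,\ell)$ to $\bbe_\bH[|\CH^r_n(\ell)|]$ is routine; the real content is the claim that the random cardinalities $|\CH^r_n(\ell)|$ stay $O(1)$, hence negligible relative to $K$, which hinges entirely on the normalization \eqref{eq:lambda} calibrating each user's per-beam feedback event to probability $1/K$, so that the expected number of contenders on any beam remains bounded no matter how many users the network carries. Everything downstream --- the two-sided squeezing of $\sinr^{(j)}_n(r,\ell)$ between the exponential-type order statistics $S^{(j)}_n(r,\ell)$ and $T^{(j)}_n(r,\ell)$ (Lemmas~\ref{lemma:CDF_bounds} and \ref{lemma:order}), and the absorption of the bounded terms $A_n$ and $\log(\rho\eta^i_n)$ into the $\doteq$ calculus --- is bookkeeping.
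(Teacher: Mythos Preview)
Your proposal is correct and follows essentially the same route as the paper: verify hypothesis \eqref{eq:Q1} of Lemma~\ref{lemma:exp_scaling2} for $Q_j=q^j_n(r,\ell)$ by interchanging the order of summation to reduce $\sum_j j\,q^j_n(r,\ell)$ to an affine expression in $\bbe\big[|\CH^r_n(\ell)|\big]$, control that expectation via the calibration \eqref{eq:lambda} (the paper gets exactly $1$ since under $\beta^k_n(r)\ge 1$ membership in $\CH^r_n(\ell)$ is \emph{equivalent} to $\sinrn^k_n(r,\ell)\ge 1$, whereas you only use the necessary direction and get $\le 1$, which is equally sufficient), apply Lemma~\ref{lemma:exp_scaling2} to the lower bound in Lemma~\ref{lemma:bound}, and close with the dirty-paper upper bound \eqref{eq:DPC_scale}. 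The paper additionally records the upper half of Lemma~\ref{lemma:bound} to pin down $\bar R_n$ itself, but since $\bar R_n$ only lower-bounds $\bbe_\bH[R_n]$ it ultimately invokes the same DPC ceiling you do.
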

\begin{proof}
We start by showing that the choice of $Q_j=q^j_n(r,\ell)$ for  $j\in\{1,\dots,K\}$, where $\{q^j_n(r,\ell)\}$ is defined in \eqref{eq:Rm2_1}, fulfils the condition (\ref{eq:Rm2_22}) of Lemma
\ref{lemma:exp_scaling2} (scaling) for any $n\in\{1,\dots,M\},\;r\in\{1,\dots,Q\}$, and $\ell\in\{1,\dots,N_t\}$. From \eqref{eq:Rm2_22} we have
\begin{align}
    \nonumber \sum_{j=1}^Kj Q_j &= \sum_{j=1}^Kj\; q^j_n(r,\ell)= \sum_{j=1}^Kj\sum_{k=j}^K\frac{P(|\CH^r_n(\ell)|=k)}{k}\\
    \nonumber &=
    \sum_{k=1}^K\frac{1}{k}P(|\CH^r_n(\ell)|=k)\sum_{j=1}^{k}j\\
    \nonumber &=
    \sum_{k=1}^K\frac{k+1}{2}P(|\CH^r_n(\ell)|=k)\\
    \nonumber&=
    \frac{1}{2}\underset{\bbe[|\CH^r_n(\ell)|]}{\underbrace{\sum_{k=1}^K\; k\; P(|\CH^r_n(\ell)|=k)}}+
    \frac{1}{2}\underset{=1}{\underbrace{\sum_{k=0}^K P(|\CH^r_n(\ell)|=k)}}-\frac{1}{2}P(|\CH^r_n(\ell)|=0)\\
    \label{eq:Q2} & \leq \frac{1}{2}\;\bbe[|\CH^r_n(\ell)|] +\frac{1}{2}\ .
\end{align}
Note that the probability that each individual user in the $n^{th}$ super-cell expresses interest in the $\ell^{th}$ beam of $B^r_n$, i.e., $\bomega^r_n(\ell)$, by feeding back its index is $\frac{1}{K}$. As shown at the beginning of Appendix~\ref{app:lemma:sets} we have
\begin{equation*}
    \mbox{if}\;\;\beta^k_n(r)\geq 1\quad\Rightarrow\quad \Big\{\quad \sinr^k_n(r,\ell)\geq\beta^k_n(r)\quad\Leftrightarrow\quad k\in\CH^r_n(\ell)\quad\Big\}\ .
\end{equation*}
Therefore,
\begin{equation*}
    P(k\in\CH^r_n(\ell))=P\Big(\sinr^k_n(r,\ell)\geq\beta^k_n(r)\Big) =1-F^k_n(\sinr^k_n(r,\ell);r) \overset{\eqref{eq:lambda}}{=}\frac{1}{K}\ .
\end{equation*}
Since $|\CH^r_n(\ell)|$ counts the number of users in the $n^{th}$ super-cell that have expressed interest in beamformer $\bomega^r_n(\ell)$, we have
\begin{equation}\label{eq:Q3}
    |\CH^r_n(\ell)|\sim\mbox{Binomial}(K,1/K)\quad\Rightarrow\quad \bbe[|\CH^r_n(\ell)|]=K\cdot\frac{1}{K}=1\ .
\end{equation}
Equations \eqref{eq:Q2} and \eqref{eq:Q3} provide that
\begin{equation*}
    \sum_jjQ_j\leq 1\quad\Rightarrow\quad \lim_{K\rightarrow\infty}\frac{\sum_{j=1}^KjQ_j}{K}=0\ .
\end{equation*}
Therefore, the necessary condition of Lemma~\ref{lemma:exp_scaling2} (scaling) given in \eqref{eq:Q1} is satisfied and we can apply Lemma~\ref{lemma:exp_scaling2} (scaling) on the bounds on $\bar R_n$ provided in Lemma~\ref{lemma:bound} (bounds on $\bar R_n$). By setting
\begin{equation*}
    W(x)=G(x)\quad,\forall j\;Q_j=q^j_n(r,\ell),\quad\mbox{and}\quad a=\rho\eta^1_n, \rho\eta^2_n\ ,
\end{equation*}
according to Lemma~\ref{lemma:bound} (bounds on $\bar R_n$) we find
\begin{align*}
    QN_t \log\log K+ QN_t \log\left(\frac{\rho\eta^1_n}{(MQN_t-1)\rho\zeta^2_n\eta^1_n+1}\right) \leq \bar R_n  \leq QN_t  \log\log K + QN_t \log(\rho\eta^2_n)\ .
\end{align*}
The bounds above demonstrate that for fixed $M,N_t,\rho,\eta^1_n$ and $\eta^2_n$, and for increasing $K$ we have
\begin{equation*}
    \lim_{K\rightarrow\infty}\frac{\bar R_n}{QN_t\log\log K}= 1\quad\Rightarrow\quad \bar R_n\doteq QN_t\log\log K\ .
\end{equation*}
Taking into account that the sum-rate throughput cannot scale faster than what is achieved via dirty paper coding, i.e., $MQN_t\log\log K$, the proof is complete.
\end{proof}

\begin{figure}[t]
  \centering
  \includegraphics[width=3.5 in]{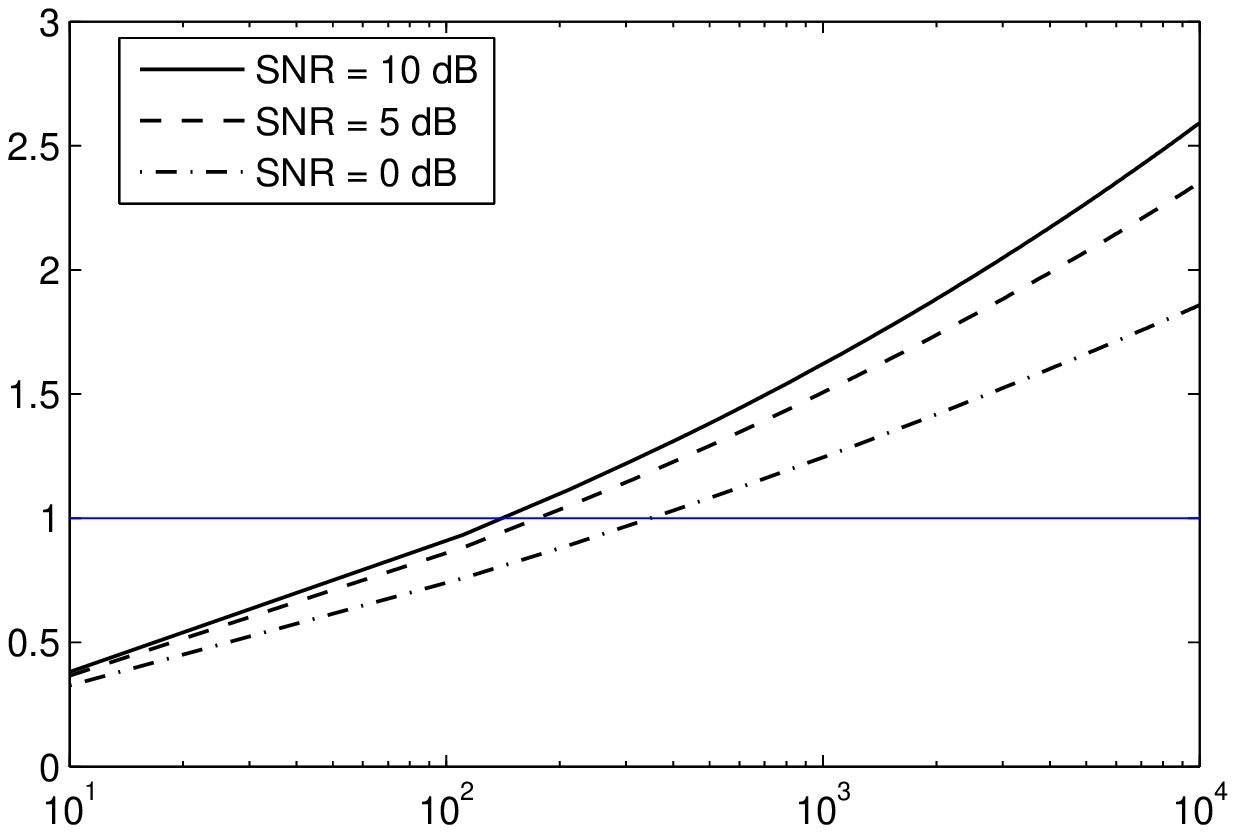}\\
  \caption{Normalizing factors $\beta^k_n(r)$ versus the number of users per super-cell $K$, for $M=3$, $Q=2$ $N_t=2$ and $\gamma^{k,r}_{n,m}=1$.}\label{fig:threshold}
   \begin{picture}(1,1)
      \put(-60,25){{\small\footnotesize Number of users per super-cell $K$}}
      \put(-120,60){\rotatebox{90}{\footnotesize Normalizing factors $\beta^k_n(r)$}}
   \end{picture}
\end{figure}

All the analyses are constructed on the basis that $\forall k,n,r$, we have $\beta^k_n(r)\geq 1$. As discussed earlier, under this condition, each user will have only one of its normalized $\sinr$s (corresponding to different beamformers) greater than 1. Also, as implied throughout the analysis, at the asymptote of large $K$, in each super-cell only $QN_t$ users will participate in the feedback processes, which is the core reason of having controlled amount of feedback. The condition $\beta^k_n(r)\geq 1$ can be guaranteed in the asymptote of large $K$ by recalling that $F^k_n(\beta^k_n(r);r)=1-\frac{1}{K}$. By numerically evaluating $F^k_n(\beta^k_n(r);r)$ through Monte Carlo realizations, Fig.~\ref{fig:threshold} shows how the normalization factor changes with the number of users at $\snr$s. It is seen that in a network with $M=3$ super-cells, each containing $Q=2$ cells and each base-station equipped with $N_t=2$ transmit antennas the normalizing factors $\beta^k_n(r)$ very quickly exceed 1 (as the requirement of Lemma~\ref{lemma:sets} on the conditional expected rates) for different $\snr$ levels $0, 5$ and 10 dB. The normalization factors are obtained via numerically solving \eqref{eq:lambda}.

\subsection{Throughput Scaling for $N_r>1$}
In this section we generalize the result of Theorem~\ref{th:sum_rate} (MISO capacity scaling) to MIMO networks with arbitrary number of receive antennas. As stated earlier in~(\ref{eq:DPC_scale}), by employing dirty-paper coding and facilitating full CSI feedback, the sum-rate throughput exhibits a double logarithmic growth with the number of receive antennas, $N_r$. We show that with slight modification to the scheduling procedure provided in Section~\ref{sec:algorithm} the same gain can be retained.

We modify the scheduling procedure such that it allows each user to receive more than one information stream across its different receive antennas. In other words, we preclude the receive antennas to jointly decode the information streams and consider different receive antennas as separate users. As a result, this translates to having effectively $KN_r$ users in the network and all the analyses provided earlier for $K$ users can be extended for the network with $KN_r$ users. It is noteworthy that although it might not be the most effective way to exploit the receive antennas, it yet achieves the optimal capacity scaling. Indeed, more sophisticated approaches involving joint signal decoding by the antenna arrays are expected to achieve the same gain. Based on this discussion, we modify the normalization factors $\beta^k_n(r)$ such that they satisfy
\begin{equation}
    \label{eq:threshold2}
    F^k_n(\beta^k_n(r);r)=1-\frac{1}{KN_r}\ ,
\end{equation}
and use the same normalization factor for all antennas of the same user. According to the modified scheduling algorithm on one
hand, each user might be required to feed back the indices of more than one beam, which potentially increases the amount of feedback
amount per user. On the other hand, increasing the normalization factor makes it more stringent for the users to satisfy their designated normalization factor constraints and send feedback. The discussions on the aggregate amount of feedback and effects of the number of users and number of receive antennas are given in Section~\ref{sec:feedback}. As a direct result of Theorem~\ref{th:sum_rate} (MISO capacity scaling) we obtain the following scaling law for MIMO networks.
\begin{theorem}[MIMO Capacity Scaling]
\label{th:sum_rateN}
For downlink transmission in a MIMO network with $M$ super-cells each with $Q$ base-stations, $K$ users per super-cell, $N_t$ transmit antenna per base-station, and $N_r$ receive antennas pre user, the sum-rate throughput achievable via the proposed scheduling procedure scales as
\begin{equation*}
    \bbe_{\bH}[R]\doteq MQN_t\log\log KN_r\ .
\end{equation*}
\end{theorem}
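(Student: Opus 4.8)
The plan is to reduce Theorem~\ref{th:sum_rateN} to Theorem~\ref{th:sum_rate} through the virtual-user device sketched in the paragraph preceding the statement: each of the $N_r$ receive antennas of a physical user is treated as an independent single-antenna user, so the network effectively carries $KN_r$ users per super-cell and operates under the single-user-decoder model of Section~\ref{sec:algorithm}. The first step is to make this reduction rigorous by checking that the $KN_r$ virtual users inherit exactly the statistical structure on which the MISO analysis of Section~\ref{sec:analysis} rests. For fixed $(n,r,\ell)$, the quantity $\sinr^k_n(r,\ell)$ of a virtual user involves only the corresponding row of each channel matrix $\bH^{k,r'}_{n,m}$ together with the common, isotropically generated beamformers; since the rows of every $\bH^{k,r'}_{n,m}$ are i.i.d.\ $\mathcal{CN}(\boldsymbol 0,\bI_{N_t})$ and independent across physical users, the $KN_r$ virtual-user SINRs $\{\sinr^k_n(r,\ell)\}$ remain mutually independent (though non-identically distributed, because of the path-loss coefficients $\gamma^{k,r}_{n,m}$), while for fixed $(k,n,r)$ the family $\{\sinr^k_n(r,\ell)\}_\ell$ is still statistically identical but dependent. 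In particular the per-virtual-user CDF $F^k_n(\cdot;r)$ keeps the same functional form, so the normalization factors defined by \eqref{eq:threshold2} are well-defined and unique exactly as in \eqref{eq:lambda}.

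With this in place, the second step observes that the scheduling procedure run with $\beta^k_n(r)$ chosen through \eqref{eq:threshold2} and with the sets $\CH^r_n(\ell)$ now indexing virtual users is \emph{literally} the procedure of Section~\ref{sec:algorithm} applied to a MISO network with $KN_r$ users per super-cell. Hence every ingredient of the MISO argument, Lemma~\ref{lemma:sets} through Lemma~\ref{lemma:exp_scaling2}, applies verbatim after the substitution $K\mapsto KN_r$: in particular $|\CH^r_n(\ell)|\sim\mathrm{Binomial}(KN_r,1/(KN_r))$, so $\bbe[|\CH^r_n(\ell)|]=1$ and the PMF condition \eqref{eq:Q1} of Lemma~\ref{lemma:exp_scaling2} again holds, yielding $\bar R_n\doteq QN_t\log\log(KN_r)$ and therefore the lower bound $\bbe_{\bH}[R]\dotgt MQN_t\log\log(KN_r)$.

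Two points must be addressed explicitly along the way. First, the hypothesis $\beta^k_n(r)\ge 1$ required by Lemmas~\ref{lemma:sets} and \ref{lemma:bound}: since $F^k_n(\beta^k_n(r);r)=1-\tfrac{1}{KN_r}\to 1$ as $K\to\infty$ while $F^k_n(1;r)<1$ (the SINR has unbounded support), we have $F^k_n(1;r)<1-\tfrac{1}{KN_r}$ for all sufficiently large $K$, and strict monotonicity of the CDF forces $\beta^k_n(r)>1$, so the condition holds in the asymptotic regime of interest, exactly as in the $N_r=1$ case illustrated in Fig.~\ref{fig:threshold}. Second, the single-user-decoder model stays consistent under splitting: if two antennas of one physical user are served on two different beams, the stream meant for one acts as interference to the other, but that contribution is already included in the denominator of \eqref{eq:sinr}, which sums over all $\ell'$; the scheduler merely guarantees that no \emph{virtual} user is served by more than one beam, while a physical user may receive up to $N_r$ streams.

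Finally, the matching upper bound is immediate: the achievable sum-rate of any scheme with single-user decoders is at most the DPC sum-rate capacity of the fully cooperative single-cell MIMO broadcast channel with $MQN_t$ transmit antennas and $MK$ users of $N_r$ antennas each, which by \eqref{eq:DPC_scale} scales as $MQN_t\log\log(KN_r)$. Combining with the lower bound of the second step gives $\bbe_{\bH}[R]\doteq MQN_t\log\log(KN_r)$. The one genuinely delicate step is the first: one must be sure that the ``split the antennas into virtual users'' reduction preserves the independence-across-users and identical-within-user-and-beam structure, since that structure is what the whole lemma chain of Section~\ref{sec:analysis} silently exploits; everything else is a bookkeeping substitution $K\mapsto KN_r$.
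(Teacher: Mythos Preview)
Your proposal is correct and follows exactly the paper's own route: treat each receive antenna as a virtual single-antenna user, replace $K$ by $KN_r$ throughout, redefine the thresholds via \eqref{eq:threshold2}, and invoke Theorem~\ref{th:sum_rate} together with the DPC upper bound \eqref{eq:DPC_scale}. The paper gives no formal proof beyond ``As a direct result of Theorem~\ref{th:sum_rate}\ldots'', so your explicit verification of the independence structure across virtual users, of the asymptotic validity of $\beta^k_n(r)\ge 1$, and of the consistency of the single-user-decoder model when one physical user receives multiple streams, simply fills in details the paper leaves implicit.
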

We remark that in the scheduling procedure we have considered that each user may receive more than one information stream. On the other hand, if we restrict each user to receive no more than one information stream, the performance  will degrade. Indeed, as shown in~\cite{Sharif:IT05} for single-cell broadcast transmissions, even with perfect $\sinr$ feedback, the sum-rate throughput scales as $N_t\log\log K$ which does not depend on the number of receive antennas. Therefore, with such restriction, adding more receiver antennas offers no asymptotic gain. Simulation results in Fig. \ref{fig:cap} depict the sum-rate throughput achieved by the proposed scheduling scheme and show that it varies linearly with the number of super-cells $M$ and transmit antennas $N_t$ and double logarithmically with the number of users $K$.
\begin{figure}[t]
  \centering
  \includegraphics[width=3.5 in]{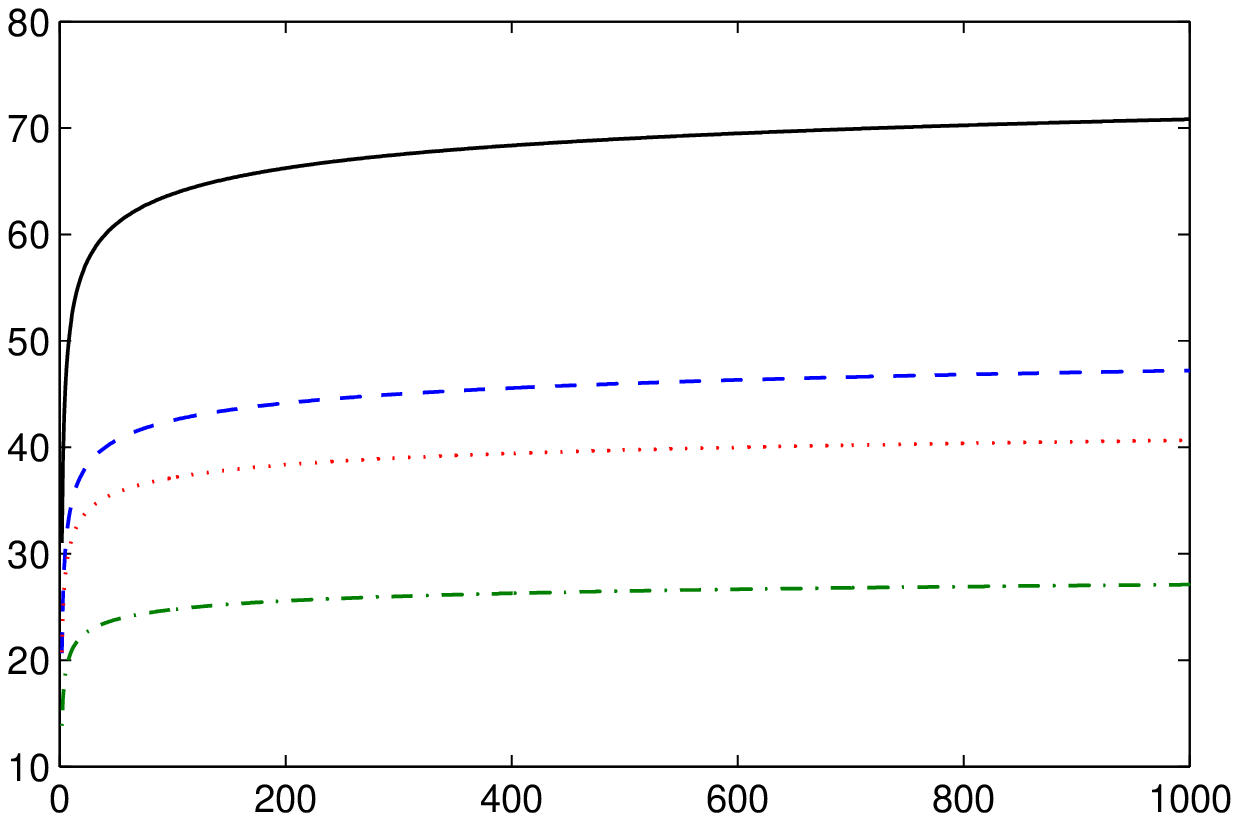}\\
  \caption{Sum-rate throughput versus the number of users per super-cell $K$, for $Q=2$ and $\rho=10$ dB. Each user has $N_r=1$ receive antenna.}\label{fig:cap}
  \begin{picture}(0,0)
      \put(-60,25){{\footnotesize Number of users per super-cell $K$}}
      \put(-120,50){\rotatebox{90}{\footnotesize Sum-rate Throughput (bits/sec/Hz)}}
      \put(0,150){\footnotesize $M=2$, $N_t=3$}
      \put(0,120){\footnotesize $M=2$, $N_t=2$}
      \put(0,95){\footnotesize $M=1$, $N_t=3$}
      \put(0,72){\footnotesize $M=1$, $N_t=2$}
   \end{picture}
\end{figure}

Finally, as a corollary to Theorem~\ref{th:sum_rateN} we can recover the existing results on the capacity scaling of the {\em single}-cell broadcast channels. The result provided in the following corollary indicates that deploying the proposed scheduling scheme for only one super-cell, requires only {\em finite}-rate feedback and yet retains the optimal capacity scaling of the broadcast channels with full channel state information feedback. Similar to the results for the multi-cell networks, in the following corollary we also assume that the number of transmit and receive antennas $N_t,N_r$ are fixed.
\begin{coro}[Single-cell Capacity Scaling]
\label{cor:sum_rateN}
For a MIMO broadcast channel with $N_t$ transmit antenna that serves $K$ users each equipped with $N_r$ receive antennas, the sum-rate throughput achievable via the proposed scheduling procedure scales as
\begin{equation*}
    \bbe_{\bH}[R]\doteq N_t\log\log KN_r\ .
\end{equation*}
\end{coro}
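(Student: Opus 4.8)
The plan is to obtain Corollary~\ref{cor:sum_rateN} as a direct specialization of Theorem~\ref{th:sum_rateN} (MIMO capacity scaling). A single-cell MIMO broadcast channel with one base-station equipped with $N_t$ transmit antennas serving $K$ users with $N_r$ receive antennas each is precisely the network model of Section~\ref{sec:model} with the parameter choices $M=1$ and $Q=1$: there is exactly one super-cell, which contains exactly one regular cell with one base-station. Note that $Q=1$ is explicitly identified in Section~\ref{sec:model} as the case in which base-station cooperation is not viable, so this is a legitimate instance of the model and the scheduling framework of Section~\ref{sec:algorithm} applies verbatim.

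First I would observe that with $M=1$ the inter-super-cell interference terms vanish from \eqref{eq:sinr} and each user simply sees the intra-cell (spatial multiplexing) interference from the other $N_t-1$ beams of the lone base-station; the scheduling procedure, the normalization in \eqref{eq:lambda} (or its $N_r>1$ counterpart \eqref{eq:threshold2}), the feedback step, and the random user-selection step are all well-defined and unchanged. Then I would simply invoke Theorem~\ref{th:sum_rateN}, whose hypotheses do not restrict $M$ or $Q$ to be larger than one, with $M=Q=1$. Substituting these values into the scaling law $\bbe_{\bH}[R]\doteq MQN_t\log\log KN_r$ immediately yields $\bbe_{\bH}[R]\doteq N_t\log\log KN_r$, which is the claimed statement.

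There is essentially no obstacle here; the corollary is a bookkeeping consequence of the general theorem. The only point worth a sentence of justification is that the condition $\beta^k_n(r)\geq 1$ required by Lemma~\ref{lemma:sets} (conditional expected rates), Lemma~\ref{lemma:bound} (bounds on $\bar R_n$), and hence by Theorem~\ref{th:sum_rate}, continues to hold asymptotically in the single-cell case: since the thresholds are defined through $F^k_n(\beta^k_n(r);r)=1-\frac{1}{KN_r}$ and the $\sinr$ CDF is a fixed distribution (for fixed $N_t,N_r,\rho$) whose upper tail is unbounded, $\beta^k_n(r)\to\infty$ as $K\to\infty$ and in particular exceeds $1$ for all sufficiently large $K$, exactly as argued after Theorem~\ref{th:sum_rate} and illustrated in Fig.~\ref{fig:threshold}. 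With this remark in place, the proof consists of nothing more than setting $M=Q=1$ in Theorem~\ref{th:sum_rateN}.
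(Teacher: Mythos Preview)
Your proposal is correct and matches the paper's approach exactly: the paper states Corollary~\ref{cor:sum_rateN} immediately after Theorem~\ref{th:sum_rateN} as a direct consequence, without giving a separate proof, and your argument simply makes explicit the specialization $M=Q=1$ that the paper leaves implicit.
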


\section{Aggregate Feedback}
\label{sec:feedback}

So far we have established the sum-rate throughput scaling. In this section we evaluate the expected amount of feedback bits imposed by the proposed scheduling procedure. This essentially provides an upper bound on the minimum required amount of information exchange in the network. As discussed in  Section~\ref{sec:algorithm}, if a user meets the criterion to be considered as a candidate for being selected and scheduled, it will feed back $ \lceil\log QN_t\rceil $ bits per eligible receive antenna, and otherwise will remain silent indicating that it is not a candidate for being scheduled. Therefore, each user, potentially, might send as many as $\min(N_t,N_r) \lceil\log QN_t\rceil $ bits via the feedback channel. This, in turn, can potentially give rise to up to $K\min(N_t,N_r) \lceil\log QN_t\rceil $ aggregate feedback bits in each super-cell, which can be prohibitive. Therefore, it is imperative to examine and control the burden put on the feedback channel.

Note that the normalization factors $\beta^k_n(r)$ change adaptively with varying $K$. Therefore, as the number of users per super-cell increases, the chance that each individual user satisfies the normalization constraint becomes slimmer. In other words, the choices of $\beta^k_n(r)$ guarantee that as the network becomes dense, the aggregate feedback load is controlled via enforcing a more stringent constraint on each user to utilize the feedback channel. In practice, obtaining $\beta^k_n(r)$ through solving \eqref{eq:lambda} can be carried out via bi-sectional search. As the CDF is an strictly increasing function in its argument and the problem in \eqref{eq:lambda} has a unique solution, a simple bi-section search, which is computationally efficient, can solve the problem with arbitrary accuracy. In the next theorem we quantify the aggregate feedback and show that the {\em expected} aggregate
feedback is a constant independent of the number of users in the network.
\begin{theorem}[Aggregate Feedback]
\label{th:feedback}
For downlink transmission in a MIMO network with $M$ super-cells each with $Q$ base-stations, $K$ users per super-cell, $N_t$ transmit antenna per base-station, and $N_r$ receive antennas pre user, by deploying the proposed scheduling procedure, the aggregate feedback in the network is $MQN_t \lceil\log QN_t\rceil $.
\end{theorem}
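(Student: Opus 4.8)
The plan is to compute the expected number of feedback bits directly from the structure of the scheduling procedure, exploiting the fact that each user transmits feedback for a given beam only when the corresponding normalized $\sinr$ exceeds $1$, an event whose probability was already pinned down during the proof of Theorem~\ref{th:sum_rate}. First I would recall from \eqref{eq:lambda} and \eqref{eq:threshold2} that for each user $U^k_n$, each base-station index $r\in\{1,\dots,Q\}$, and each beam index $\ell\in\{1,\dots,N_t\}$, the event ``user $U^k_n$ declares $\bomega^r_n(\ell)$ as (one of) its most favorable beam(s)'' has probability exactly $\frac{1}{KN_r}$ per receive antenna, since $\beta^k_n(r)$ is chosen so that $F^k_n(\beta^k_n(r);r)=1-\frac{1}{KN_r}$ and, under $\beta^k_n(r)\ge 1$, a user feeds back the index of beam $(r,\ell)$ for a given antenna iff the associated $\sinr$ for that antenna exceeds $\beta^k_n(r)$.

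Next I would count. Treating the $N_r$ receive antennas of each user as $\min(N_t,N_r)$ independent potential feedback opportunities as in the modified procedure, and noting that a single feedback message costs $\lceil\log QN_t\rceil$ bits (the number of bits needed to index one of the $QN_t$ beamformers of a super-cell), the expected number of feedback messages sent by all users of super-cell $\Q_n$ is the expected total number of (user, antenna) pairs whose favorable beam crosses the threshold. Summing the per-(user, antenna, beam) probability $\frac{1}{KN_r}$ over the $K$ users, the $N_r$ antennas, and the $QN_t$ beams of the super-cell gives $K\cdot N_r\cdot QN_t\cdot\frac{1}{KN_r}=QN_t$ expected feedback messages per super-cell, hence $QN_t\lceil\log QN_t\rceil$ expected feedback bits per super-cell. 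Multiplying by the $M$ super-cells, which operate without any mutual information exchange, yields the claimed aggregate $MQN_t\lceil\log QN_t\rceil$. This is essentially the observation that $\bbe[|\CH^r_n(\ell)|]=1$ from \eqref{eq:Q3}, summed over all $QN_t$ beams and all $M$ super-cells, weighted by the cost $\lceil\log QN_t\rceil$ of each index transmission.

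The main subtlety — and the step I expect to require the most care — is handling the dependence across the $N_t$ beams of a single base-station for a fixed antenna. Recall that for fixed $k,n,r$ the terms $\{\sinr^k_n(r,\ell)\}_\ell$ are statistically identical but \emph{not} independent, so one cannot naively assert that a user sends at most one index per base-station; however, for the purpose of bounding (and in fact computing) the \emph{expected} aggregate load, linearity of expectation sidesteps this entirely: the expected count is the sum of the marginal crossing probabilities regardless of dependence, and each marginal is exactly $\frac{1}{KN_r}$. I would therefore state the result as the expected aggregate feedback and emphasize that the bound $MQN_t\lceil\log QN_t\rceil$ is attained in expectation and is independent of $K$, which is precisely the ``average level of information exchange is independent of the number of users'' claim from the abstract. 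A remark would note that the worst-case per-user load $\min(N_t,N_r)\lceil\log QN_t\rceil$ bits is a crude over-count and that the averaging is what delivers the clean constant.
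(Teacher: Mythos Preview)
Your argument is correct and in fact cleaner than the paper's. Both proofs rest on the same observation you cite from \eqref{eq:Q3}, namely that under the standing assumption $\beta^k_n(r)\geq 1$ each receive antenna declares beam $(r,\ell)$ with probability exactly $1/(KN_r)$. The difference is in how that fact is turned into the aggregate count. You invoke linearity of expectation directly on the indicator of ``antenna $i$ of user $U^k_n$ reports beam $(r,\ell)$'' and sum over all $K\cdot N_r\cdot QN_t$ triples, obtaining the exact value $QN_t$ messages per super-cell for every $K$. The paper instead first upper-bounds the probability that a given antenna feeds back \emph{some} beam via a union bound ($\mu=QN_t/(KN_r)$), then models the number of feeding-back antennas per user as binomial in $\mu$, computes $\bbe[R^{\rm FB}]\leq\lceil\log QN_t\rceil\sum_j j\binom{N_r}{j}\mu^j(1-\mu)^{N_r-j}$, multiplies by $K$, and takes $K\to\infty$ so that only the $j=1$ term survives. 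Your route avoids both the union bound and the asymptotic limit, and yields an identity rather than an asymptotic upper bound; the paper's route, on the other hand, makes the per-user feedback distribution explicit, which is useful if one later wants tail or variance information rather than just the mean. One small cosmetic point: the phrase ``$\min(N_t,N_r)$ independent potential feedback opportunities'' is a vestige of the paper's stream-count bookkeeping and plays no role in your actual computation, which correctly sums over all $N_r$ antennas; you could drop it.
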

\begin{proof}
Lets define $N=\min(N_t,N_r)$, which is the maximum number of
information streams a user can receive. Also let $\sinr^k_n(r,\ell,i)$ denote the detection $\sinr$ of $i^{th}$ receive antenna of user $U^k_n$ when served along the beamforming vector $\bomega^r_n(\ell)$. The probability that the $i^{th}$ receive antenna of user $U^k_n$ is a candidate to be served by one of the $QN_t$ beamformers of its designated super-cell, is upper bounded by
\begin{equation*}
    P\Big(\max_{r,\;\ell}\frac{\sinr^k_n(r,\ell,i)}{\beta^k_n(r)}\geq 1\Big)\overset{\tiny\mbox{union bound}}{\leq}\sum_{r=1}^Q\sum_{\ell=1}^{N_t}P\Big(\sinr^k_n(r,\ell,i)\geq\beta^k_n(r)\Big) =\frac{QN_t}{KN_r}\dff \mu\ .
\end{equation*}
Hence, the probability that exactly $j\leq N$ users satisfy the normalization constraint is upper bounded by $\mu_j\dff {N_r\choose j}\mu^j(1-\mu)^{N_r-j}$. Therefore, by denoting the number of feedback bits per user by $R_n^{\tiny{ \rm FB}}$ and taking into account that if a user satisfies the normalization constraint, it will feed back $ \lceil\log QN_t\rceil $ bits, we get
\begin{equation*}
    \bbe[R^{\rm FB}]\leq  \lceil\log QN_t\rceil \sum_{j=1}^{N}j\mu_j= \lceil\log QN_t\rceil \sum_{j=1}^{N}j{N_r\choose
    j}\mu^j(1-\mu)^{N_r-j}.
\end{equation*}
Hence, the aggregate amount of feedback in the $n^{th}$ super-cell, denoted by $\bbe[R^{\rm FB}_n]$ in the asymptote of large $K$ is
\begin{eqnarray}
  \nonumber\bbe[R^{\rm FB}_n] &\leq & \lim_{K\rightarrow\infty}K \lceil\log QN_t\rceil \sum_{j=1}^{N}j{N_r\choose
  j}\mu^j(1-\mu)^{N_r-j}\\
  \nonumber&=& \lceil\log QN_t\rceil \sum_{j=1}^{N}j{N_r\choose j}
  \lim_{K\rightarrow\infty}\frac{\Big(\frac{QN_t}{KN_r}\Big)^j\Big(1-\frac{QN_t}{KN_r}\Big)^{N_r-j}}{\frac
  1 K}\\
  \nonumber&=& \lceil\log QN_t\rceil \sum_{j=1}^{N}j{N_r\choose j} \Big(\frac{QN_t}{N_r}\Big)^j\lim_{K\rightarrow\infty}\frac{1}{K^{j-1}}\\
  \label{eq:FB2}&=&QN_t \lceil\log QN_t\rceil \ ,
\end{eqnarray}
which provides that the expected aggregate amount of feedback bits in the network does not exceed $MQN_t \lceil\log QN_t\rceil $. The last equality holds by noting that for $j>1$ the term $\frac{1}{K^{j-1}}\rightarrow 0$ and the only non-zero term corresponds to $j=1$ and is equal to $QN_t \lceil\log QN_t\rceil$.
\end{proof}

The simulation results provided in Fig. \ref{fig:sum_feedback} quantify the aggregate amount of feedback imposed by our scheduling protocol to the feedback channel. As anticipated by the analyses, as the number of users increase, the total amount of feedback bits is $QN_t \lceil\log QN_t\rceil $ which is independent of the number of users as well as the number of receive antennas. This implies that as the number of users increase, only $QN_t$ users will participate in feeding back their indices of their desired beams. This is due to the fact that high normalizing factors for networks with large number of users, only allow the best $QN_t$ users to feed back the indices of their corresponding beams. For the simulations we consider the settings $Q=1, 2$ base-stations per super-cell and assume that each user has $N_r=3$ receive antennas.

\begin{figure}[t]
  \centering
  \includegraphics[width=3.5 in]{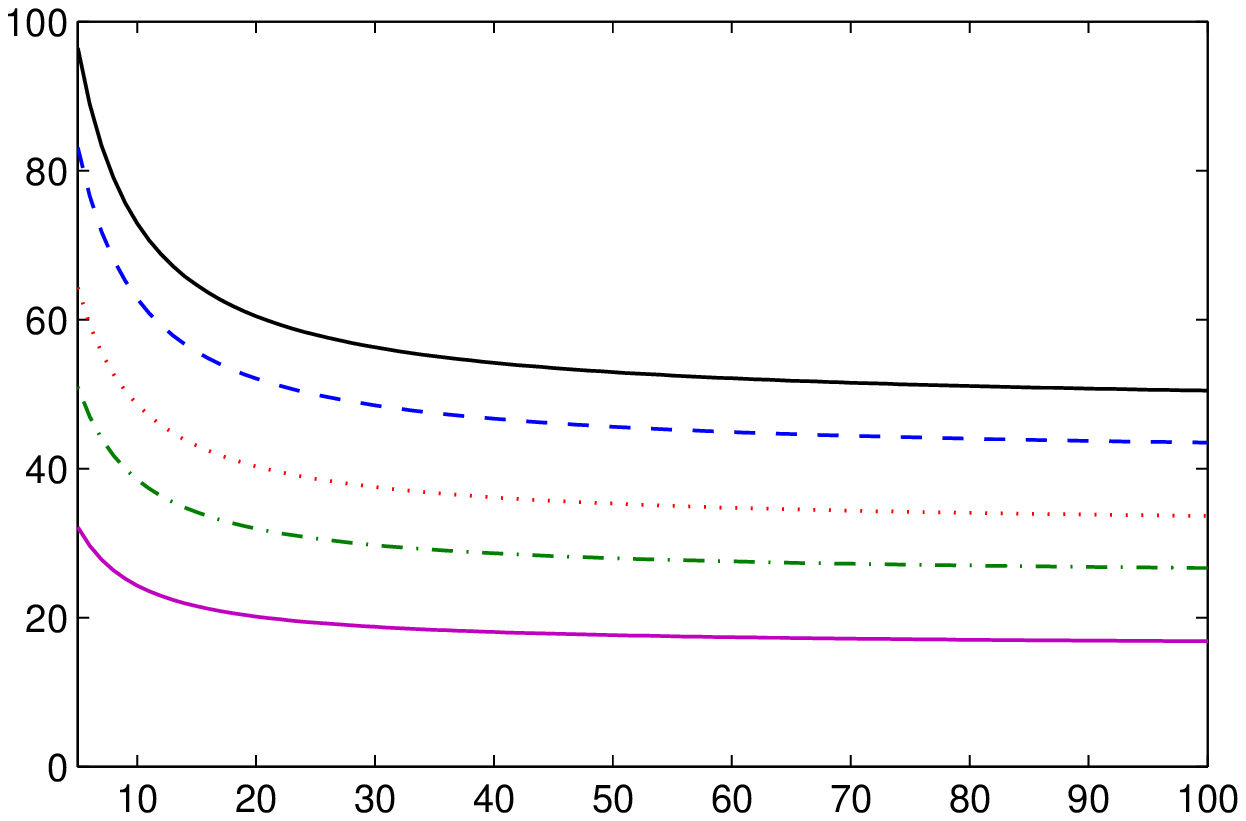}\\
  \caption{Aggregate amount of feedback versus the number of users per super-cell $K$. Each user has $N_r=1$ receive antennas and $\rho=10$ dB.}\label{fig:sum_feedback}
  \begin{picture}(1,1)
      \put(-60,25){{\footnotesize Number of users per super-cell $K$}}
      \put(-120,60){\rotatebox{90}{\footnotesize Sum-rate Feedback $\bbe[R^{\rm FB}_n]$}}
      \put(0,72){\scriptsize $Q=2$, $N_t=4$}
      \put(70,72){\scriptsize $16\log_28$}
      \put(0,85){\scriptsize $Q=2$, $N_t=3$}
      \put(70,85){\scriptsize $12\log_26$}
      \put(0,95){\scriptsize $Q=2$, $N_t=2$}
      \put(70,95){\scriptsize $8\log_24$}
      \put(0,108){\scriptsize $Q=1$, $N_t=3$}
      \put(70,108){\scriptsize $6\log_23$}
      \put(0,120){\scriptsize $Q=1$, $N_t=2$}
      \put(70,120){\scriptsize $4\log_22$}
   \end{picture}
\end{figure}

\section{Practical Issues}
\label{sec:fairness}
\subsection{Size of Super-cells}

The size of the super-cells $Q$ impacts both the scaling rate of the sum-rate throughput (Theorem~\ref{th:sum_rate} on the MISO capacity scaling) as well as the amount of information exchange in the network (Theorem~\ref{th:feedback} on aggregate feedback).
The capacity scaling analysis shows that the sum-rate throughput scales as $MQN_t\log\log KN_r$. This is the scaling of a network with a total number of $MQ$ base-stations and $MK$ users, which means there exist $T\dff \frac{K}{Q}$ users per base-station. When keeping this ratio of users per base-station constant, the sum-rate throughput scales as $MQN_t\log\log QTN_r$. This scaling essentially implies that as long as the product $MQ$ is a constant, the sum-rate scaling remains unchanged. The reason is that for fixed $MQ$, necessarily $M$ will be a finite integer, based on which we have
\begin{equation*}
  \lim_{K\rightarrow\infty}\frac{\bbe_{\bH}[R]}{MQN_t\log\log MKN_r}=\lim_{K\rightarrow\infty}\frac{\log\log KN_r}{\log\log MKN_r}=1,
\end{equation*}
which implies that the throughput of the proposed procedure scales identical to the scaling law provided in Theorem~\ref{th:sum_rateN}. Therefore, when the total number of base-stations in the network $MQ$ is fixed, irrespectively of how we cluster them to construct the super-cells, the scaling law remains fixed. On the other hand, from Theorem~\ref{th:sum_rate} (MISO capacity scaling), the aggregate amount of information exchange is upper bounded by $MQN_t \lceil\log QN_t\rceil $. Once having a fixed number of base-stations in the network, i.e., $MQ$ is fixed, the amount of information exchange is clearly minimized by setting $Q=1$. This choice of $Q$ corresponds to the setting that  there is no cooperation among the base-stations and each base-station acts independently of others. This fact in conjunction with the freedom in clustering the network shows that even when the network is completely distributed (no base-station cooperation) by exchanging as low as $MN_t \log N_t $ information bits, still the optimal sum-rate throughput scaling is achievable.

There exists, however, a cost associated with such distributed downlink transmission. While the sum-rate throughput of the proposes scheme and that of dirty-paper coding scale identically, there exists a gap between them. The distributed processing penalizes the performance by enlarging this gap. By increasing the size of the super-cells $Q$ we can reduce this gap, which shows that we can trade additional information exchange in favor of increasing the achievable sum-rate throughput, which is expected intuitively. The monte-carlo simulation results in Fig.~\ref{fig:cooperation} depicts the impact of clustering on the achievable sum-rate throughput. We consider a network containing $MQ=6$ base-stations and look into the sum-rate of the four possible clustering cases of $(M,Q)=(6,1), (3,2), (2,3)$ and (1,6). While all four cases exhibit identical scalings, it is observed allowing more level of cooperation among the base-stations, i.e., large clusters sizes $Q$, improves the sum-rate throughput. For the simulations we have considered the two cases of $N_t=1,4$ and have assumed that the users have $N_r=3$ receive antennas and $\rho=10$ dB.

\begin{figure}[t]
  \centering
  \includegraphics[width=3.5 in]{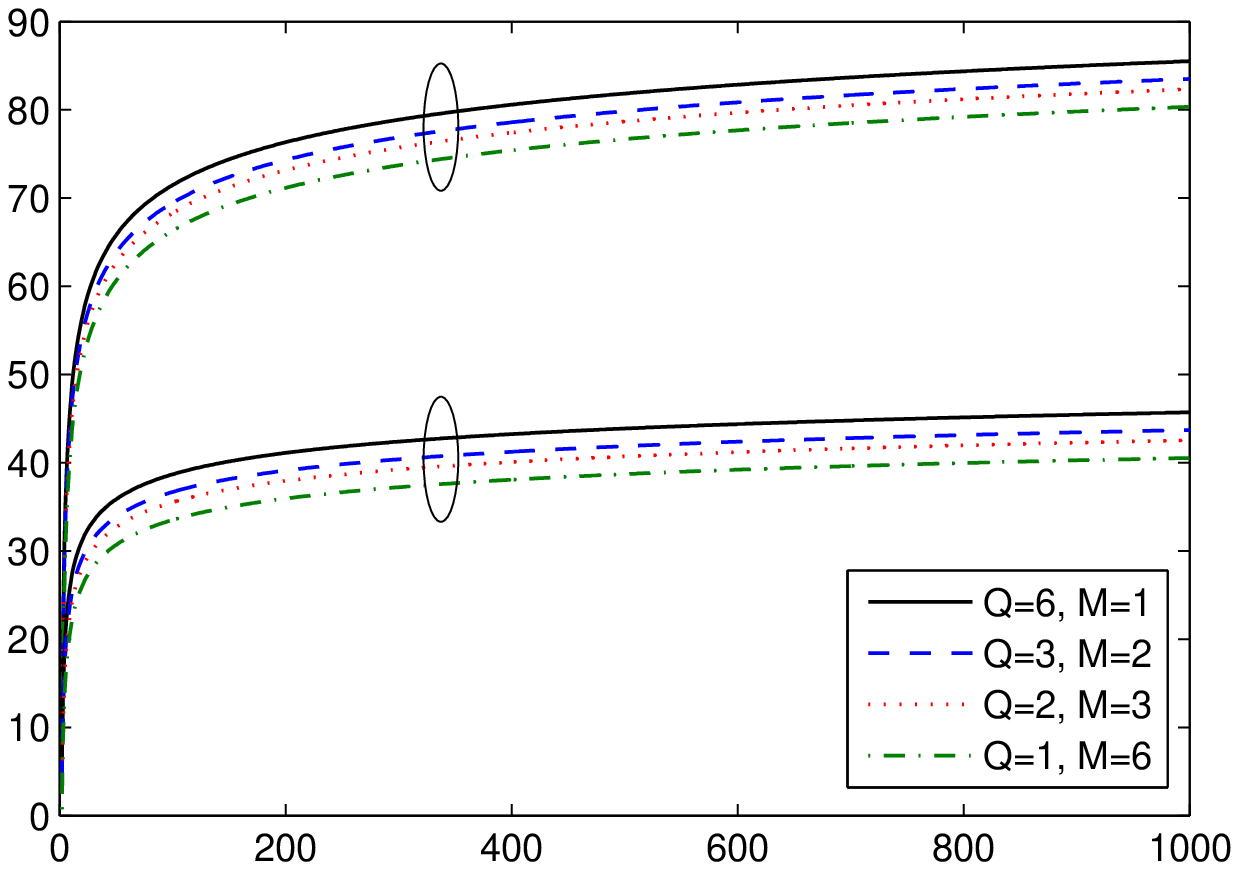}\\
  \caption{The impact of super-cell sizes $Q$ on the achievable sum-rate throughput.}\label{fig:cooperation}
  \begin{picture}(1,1)
      \put(-60,25){{\footnotesize Total number of users in the network $MK$}}
      \put(-120,60){\rotatebox{90}{\footnotesize Sum-rate Throughput (bits/sec/Hz)}}
      \put(-30,90){$N_t=2$}
      \put(-30,150){$N_t=4$}
   \end{picture}
\end{figure}

\subsection{Fairness}

In a non-homogenous network the reception quality at various users not only is influenced by how their channel direction vectors are
aligned with the beam direction, but also is impacted by the path-loss and the shadowing they experience. As scheduling in our scheme involves opportunistic selection of the users who have satisfied a minimum level of reception quality, the system might be dominated by the user with the strongest reception qualities. In general, there exist an inherent tension between opportunistic user selection and maintaining some notion of fairness among the users.  Therefore, it is imperative to investigate how such opportunistic user selection affects fairness.

It is, interestingly, observed that by deploying the proposed scheduling procedure, fairness, in the sense that all users are equiprobable to be served by the base-stations, is maintained. Such fairness is preserved as we assign different normalization factors to the users with different path-loss and shadowing ($\gamma^{k,r}_{m,n}$). This argument is formalized in the next theorem.
\begin{theorem}[Fairness]
\label{th:fairness}

For downlink transmission in a MIMO network with $M$ super-cells each with $Q$ base-stations, $K$ users per super-cell, $N_t$ transmit antenna per base-station, and $N_r$ receive antennas pre user, all users are equally likely to be served by the proposed scheduling procedure.
\end{theorem}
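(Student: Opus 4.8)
The plan is to reduce the statement, via the $KN_r$-virtual-user device of Section~\ref{sec:fairness} (each receive antenna counted as a separate user, with the normalization \eqref{eq:threshold2}), to the case $N_r=1$, and then to prove that $P(U^k_n\text{ is served})$ depends on neither $k$ nor $n$. The first step is to pin down candidacy exactly: under the standing hypothesis $\beta^k_n(r)\ge 1$, as shown at the beginning of Appendix~\ref{app:lemma:sets} (and recalled in the proof of Theorem~\ref{th:sum_rate}), at most one of the normalized SINRs $\sinrn^k_n(r,\ell)$ can exceed $1$, so by \eqref{eq:H} one has $k\in\CH^r_n(\ell)\iff \sinr^k_n(r,\ell)\ge\beta^k_n(r)$, and --- this is the crux --- by the defining relation \eqref{eq:lambda} this event has probability exactly $1/K$ \emph{for every user $k$}, even though the CDFs $F^k_n(\cdot;r)$ differ across users because of the heterogeneous attenuations $\gamma^{k,r}_{n,m}$. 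This equalization of candidacy probabilities across users, regardless of their path-loss and shadowing, is precisely what the design of the normalization factors buys.

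Next I would isolate the right symmetry. Associate with each user $U^k_n$ its \emph{candidacy profile}: the random vector indexed by the $QN_t$ beamformers of $\Q_n$ having a single $1$ in position $(r,\ell)$ if $k\in\CH^r_n(\ell)$ and the zero vector otherwise (well defined by the at-most-one property). Combining the previous paragraph with the mutual independence of $\{\sinr^k_n(r,\ell)\}_{k=1}^K$ --- which follows from the same disjoint-channel-support reasoning as in Section~\ref{sec:algorithm}, distinct users contributing disjoint channel entries --- the profiles are \emph{i.i.d.\ across $k$}: each equals $e_{(r,\ell)}$ with probability $1/K$ and is zero with probability $1-QN_t/K$, with no dependence on $k$ at all. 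The scheduling outcome (which user, if any, is served on each beam) is a deterministic function of the collection of profiles together with the independent uniform tie-breaking variables, and this function is equivariant under relabeling users. Hence the law of $\{U^k_n\text{ is served}\}$ is invariant under permutations of the users of $\Q_n$, giving the claim for $N_r=1$; rerunning the identical argument with $K$ replaced by $KN_r$ and users replaced by user--antenna pairs, and invoking \eqref{eq:threshold2}, yields the general case.

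For a concrete restatement one can also compute directly. Since a user is served along $\bomega^r_n(\ell)$ iff it lies in $\CH^r_n(\ell)$ and is the (uniformly) chosen member, $P(U^k_n\text{ served along }\bomega^r_n(\ell))=\bbe\!\left[\mathbf{1}\{k\in\CH^r_n(\ell)\}/|\CH^r_n(\ell)|\right]$; conditioning on whether $k$ is a candidate, the number of \emph{other} candidates for that beam is $\mathrm{Binomial}(K-1,1/K)$ and is independent of that event, so this probability equals $\tfrac1K\,\bbe\!\left[(1+\mathrm{Binomial}(K-1,1/K))^{-1}\right]$; summing over the $QN_t$ beams (which are disjoint options for a fixed user) gives $P(U^k_n\text{ served})=\tfrac{QN_t}{K}\,\bbe\!\left[(1+\mathrm{Binomial}(K-1,1/K))^{-1}\right]$, visibly independent of $k$ and $n$.

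The main obstacle --- indeed essentially the only substantive point --- is establishing that $P(k\in\CH^r_n(\ell))$ is the \emph{same} $1/K$ for all users (this is where \eqref{eq:lambda}/\eqref{eq:threshold2} does the work) and then upgrading the pairwise independence of SINRs stated in Section~\ref{sec:algorithm} to the mutual independence needed for the profiles to be genuinely i.i.d.; once these are in place, the rest is the symmetrization/conditioning bookkeeping above. A minor additional care is needed in the $N_r>1$ case to specify what ``served'' means (a specific stream versus at least one stream), but the exchangeability argument of the second paragraph covers both readings unchanged.
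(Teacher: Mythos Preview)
Your proposal is correct and shares the paper's core idea: the key observation in both is that the defining relation \eqref{eq:lambda} (respectively \eqref{eq:threshold2} for $N_r>1$) forces $P(k\in\CH^r_n(\ell))=1/K$ (respectively $1/(KN_r)$) \emph{independently of $k$}, so the heterogeneous attenuations $\gamma^{k,r}_{n,m}$ are neutralized at the candidacy stage. The paper's proof essentially stops there, noting that the probability of a user having exactly $j$ eligible antennas is identical across users and concluding fairness directly from that and the uniform random selection.

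Your argument goes further in two useful ways. First, you make explicit the passage from equal candidacy probabilities to equal \emph{service} probabilities via an exchangeability argument (i.i.d.\ candidacy profiles plus a permutation-equivariant selection rule), which the paper leaves implicit. Second, your direct computation $P(U^k_n\text{ served})=\tfrac{QN_t}{K}\,\bbe\!\left[(1+\mathrm{Binomial}(K-1,1/K))^{-1}\right]$ gives a quantitative handle the paper does not provide. Both additions are sound and buy rigor; the paper's version is shorter but relies on the reader supplying the final symmetry step.
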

\begin{proof}
The proof relies on the choices of the normalization factors given in \eqref{eq:threshold2}. Let us define $\sinr^k_n(r,\ell,i)$ as received $\sinr$ of the $i^{th}$ receive antenna of user $U^k_n$ when served along the beamforming vector $\bomega^r_n(\ell)$. By following the same line of argument as in \eqref{eq:Q3} and taking into account \eqref{eq:threshold2}, the probability that the $i^{th}$ receive antenna of user $U^k_n$ is a candidate to be scheduled with the beamformer $\bomega^r_n(\ell)$ is
\begin{equation}
    \label{eq:theta}
    \theta_n\dff P\Big(\sinr^k_n(r,\ell,i) \geq \beta^k_n(r)\Big)=1- F^k_n(\beta^k_n(r);r)=\frac{1}{KN_r}\ ,
\end{equation}
which is identical for all users (in the $n^{th}$ super-cell) and all transmit and receive antennas. Therefore, the probability that exactly $j\leq\min(N_t,N_r)$ number of the receive antennas of user $k$ being eligible to be served by a base-station is
\begin{equation*}
    \theta^j_n\dff{N_t\choose j}\theta_n^j(1-\theta_n)^{N_t-j},
\end{equation*}
which is also identical for all users in the $n^{th}$ super-cell. Therefore, all users in each super-cell are equally likely to be scheduled for receiving any certain number of information streams.
\end{proof}

\subsection{Training}
As briefly mentioned at the end of Section~\ref{sec:algorithm} each user needs to calculate the $\sinr$s that each of the beamformers can sustain for carrying the information intended for that user. Acquiring such $\sinr$s is viable during the training interval. Such training being carried out on the downlink channel will cost downlink bandwidth. Nevertheless, it can be readily shown that such bandwidth loss does not penalize the throughput scaling rate. More specifically, due to the broadcast nature of the wireless channels, the base-stations will transmit a common training message for all users. The duration of such training will be {\em fixed} and independent of the number of users in the network. If we denote the fixed throughput loss due to such training by $R_{\sf train}$, then the throughput after penalizing it by the training bandwidth loss will be $\bbe_{\bH}[R]-R_{\sf train}$. $R_{\sf train}$ being fixed provides that for increasing $K$, $\bbe_{\bH}[R]-R_{\sf train}$ will scale similar to $\bbe_{\bH}[R]$. Therefore, while training incurs a loss on the throughput, it will not affect the throughput scaling rate provided in Theorem~\ref{th:sum_rateN}.

\section{Conclusions}
\label{sec:conclusion}

In this paper we have assessed the minimum amount of information exchange required in a multi-cell multiuser downlink transmission with base-station cooperation, in order to retain the optimal capacity scaling. By providing a constructive proof we have demonstrated that by an appropriate design of a feedback mechanism for each super-cell, we can avoid any cross-(super) cellular information exchange and yet achieve the optimal scaling law. More specifically, for a MIMO network consisting of $M$ super-cells, each consisting of $Q$ base-stations that carry out the downlink transmission collaboratively, when each base-station is equipped with $N_t$ transmit antennas, exchanging $MQN_t \lceil\log QN_t\rceil $ information bits across the network suffices to guarantee the achievability of the optimal scaling of the sum-rate throughput. This result relies on a constructive proof, where we offer a scheduling framework that satisfies the information exchange constraints. Furthermore, the scheme also provides good performance for practical networks and sustain fairness among the users in the network.

\appendix

\section{Proof of Lemma~\ref{lemma:sets} (Conditional Expected Rates)}
\label{app:lemma:sets}
{\bf Scheme of the Proof:}\\
\begin{equation}\label{eq:H2}
  \CH^r_n(\ell)=\{k\med \sinrn^k_n(r,\ell)\geq 1\}\ .
\end{equation}
This is due to the fact that when $\beta^k_n(r)\geq 1$, each user in the super-cell $\Q_n$ will have a normalized $\sinr$ greater than 1 for {\em at most} one of $QN_t$ beamformers $\{\bomega^r_n(\ell)\}_{r,\ell}$, i.e., if a user has a normalized $\sinr\geq 1$ for a beamformer, its normalized $\sinr$ for the rest of beamformers is smaller than 1. This can be readily verified by noting that when $\beta^k_n(r)\geq 1$ the we have
\begin{align*}
    \sinrn^k_n(r,\ell)\geq 1\quad &\overset{\eqref{eq:sinrn}}\Rightarrow\quad  \sinr^k_n(r,\ell)\geq \beta^k_n(r)\geq 1\quad\overset{\eqref{eq:sinr}}\Rightarrow \quad \sinr^k_n(r',\ell')<1\quad\forall (r',\ell')\neq(r,\ell)\\
    & \overset{\eqref{eq:sinrn}}\Rightarrow\quad \sinrn^k_n(r',\ell')<1\quad\forall (r',\ell')\neq(r,\ell)
\end{align*}
Given the above characterization of $\CH^r_n(\ell)$ in \eqref{eq:H2}, the rest of the proof has three major steps. In the first step by using the characterization of $\CH^r_n(\ell)$ and some simple manipulation the conditional expected value of the rate $\bbe_\bH[R^r_n(\ell)\med \CH^r_n(\ell)]$ is reformulated. In the second step, we find a lower bound on this conditional expected rate. This lower bound itself is also a conditional expected rate with a different condition compared with $\bbe_\bH[R^r_n(\ell)\med \CH^r_n(\ell)]$. Finally, in the third step we show how removing these new conditions from the conditional expected rates will provide a new lower bound on $\bbe_\bH[R^r_n(\ell)\med \CH^r_n(\ell)]$ which is appropriate related to the order statistics of the set $\{\sinr^1_{n}(r,\ell),\dots,\sinr^K_{n}(r,\ell)\}$.\vspace{.3 in}\\
{\bf Step 1) Reformulating $\bbe_\bH[R^r_n(\ell)\med\CH^r_n(\ell)]$}:\\
By recalling \eqref{eq:Rlm} we have
\begin{eqnarray}\label{eq:Rlm2}
  \bbe_\bH[R^r_n(\ell)\med \CH^r_n(\ell)]= \frac{1}{|\CH^r_n(\ell)|}\sum_{k\in\CH^r_n(\ell)} \bbe_\bH\left[\log\left(1+\sinr^k_n(r,\ell)\right)\;\big |\;\CH^r_n(\ell)\right].
\end{eqnarray}
On the other hand, from \eqref{eq:H2} we have
\begin{equation}\label{eq:H3}
  \forall k\in\CH^r_n(\ell): \sinrn^k_n(r,\ell)\geq 1 \hspace{.2 in}\mbox{and}\hspace{.2 in} \forall k'\notin\CH^r_n(\ell): \sinrn^{k'}_n(\ell) < 1\ .
\end{equation}
By combining \eqref{eq:Rlm2} and \eqref{eq:H3} we find that
\begin{align}
    \nonumber
    &\hspace{-.0 in}\bbe_\bH[R^r_n(\ell)\med \CH^r_n(\ell)]\\
    \nonumber & = \frac{1}{|\CH^r_n(\ell)|}\sum_{k\in\CH^r_n(\ell)}\bbe_{\bH}\bigg[\log\Big(1+\sinr^k_n(r,\ell)\Big)\;\big|\;
    \forall k\in\CH^r_n(\ell): \sinrn^k_n(r,\ell)\geq 1;\\
    \nonumber &  \hspace{3.3 in}\mbox{and}\quad \forall k'\notin\CH^r_n(\ell): \sinrn^{k'}_n(\ell) < 1\;\bigg]\\
    \nonumber &=\frac{1}{|\CH^r_n(\ell)|}\sum_{k\in\CH^r_n(\ell)}\bbe_{\bH}\bigg[\log\Big(1+\sinr^k_n(r,\ell)\Big)\;\big|\;
    \forall k\in\CH^r_n(\ell): \sinr^k_n(r,\ell)\geq \beta^k_n(r);\\
    \label{eq:rate_set2} &\hspace{3.3 in}\mbox{and}\quad \forall k'\notin\CH^r_n(\ell): \sinr^{k'}_n(\ell) < \beta^{k'}_n(r)\bigg]\ ,
\end{align}
where in the first equality we have replaced \eqref{eq:H3} into \eqref{eq:Rlm2} and in the second equality we have invoked that $\sinrn^k_n(r,\ell)=\frac{\sinr^k_n(r,\ell)}{\beta^k_n(r)}$.\vspace{.3 in}\\
{\bf Step 2) Lower Bound on $\bbe_\bH[R^r_n(\ell)\med \CH^r_n(\ell)]$:}\\
In the next step we manipulate the conditions of the expected rate given in \eqref{eq:rate_set2} and obtain a more tractable lower bound on $\bbe_\bH[R^r_n(\ell)\med \CH^r_n(\ell)]$. The core of such manipulation is given in the following lemma.

\begin{lemma}[Conditional Expectation]
\label{lemma:condition}
For a random variable $X$, increasing function $g(\cdot)$
and real values $b\geq a$ that belong to the support of $X$ we have
\begin{equation}\label{eq:lemma:condition}
    \bbe\Big[g(X) \med X\geq b \Big]\geq \bbe\Big[g(X) \med X\geq
    a\Big].
\end{equation}
\end{lemma}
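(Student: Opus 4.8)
The plan is to prove Lemma~\ref{lemma:condition} directly from the definition of conditional expectation, exploiting the monotonicity of $g$. First I would write out both conditional expectations as integrals against the conditional density (or, more carefully, against the conditional distribution of $X$): for any threshold $c$ in the support of $X$,
\[
    \bbe[g(X)\med X\geq c]=\frac{1}{P(X\geq c)}\int_{[c,\infty)}g(x)\,dF_X(x)\ ,
\]
where $F_X$ is the law of $X$. The claim is that this quantity is nondecreasing in $c$, which immediately gives \eqref{eq:lemma:condition} since $b\geq a$.

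To establish monotonicity in $c$, the clean approach is a coupling/stochastic-dominance argument rather than brute-force differentiation. Let $X_c$ denote a random variable with the law of $X$ conditioned on $\{X\geq c\}$. The key observation is that for $b\geq a$, the conditional distribution $X_b$ stochastically dominates $X_a$: for every $t$,
\[
    P(X_b>t)=\frac{P(X>t,\;X\geq b)}{P(X\geq b)}\geq \frac{P(X>t,\;X\geq a)}{P(X\geq a)}=P(X_a>t)\ .
\]
When $t\geq b$ both numerators equal $P(X>t)$ and the inequality reduces to $P(X\geq a)\geq P(X\geq b)$, which holds; when $t<b$ the left side is $1$, so the inequality is trivial. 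Since $X_b$ stochastically dominates $X_a$ and $g$ is nondecreasing, $\bbe[g(X_b)]\geq\bbe[g(X_a)]$, which is exactly \eqref{eq:lemma:condition}. I would present the $P(X\geq b)>0$ case (so the conditioning is well-defined); the degenerate cases where a conditioning event has probability zero are vacuous or handled by convention.

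Alternatively, if one prefers to avoid invoking stochastic dominance as a named tool, the same fact can be shown by a direct two-region splitting argument: write the event $\{X\geq a\}$ as the disjoint union of $\{a\leq X<b\}$ and $\{X\geq b\}$, so that $\bbe[g(X)\med X\geq a]$ is a convex combination of $\bbe[g(X)\med a\leq X<b]$ and $\bbe[g(X)\med X\geq b]$; since $g$ is nondecreasing, $g(X)\leq g(b)\leq g(X)$ pointwise on the respective pieces gives $\bbe[g(X)\med a\leq X<b]\leq g(b)\leq \bbe[g(X)\med X\geq b]$, and a convex combination of two numbers, one of which is $\leq$ the other, is itself $\leq$ the larger one. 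I expect the only delicate point is bookkeeping the cases where $P(a\leq X<b)=0$ (then the two conditional expectations coincide and there is nothing to prove) or where $g(X)$ is not integrable; for the application in \eqref{eq:rate_set2} we have $g=\log(1+\cdot)$ and the $\sinr$s have finite expected logarithm, so integrability is not an issue. This case analysis is routine, so the lemma is essentially immediate once the right decomposition is written down.
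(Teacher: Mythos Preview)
Your proposal is correct. Your first (stochastic-dominance) argument is essentially the paper's own proof: the paper sets $Y=g(X)$, verifies the CDF inequality
\[
F_{Y\mid X\geq b}(y)=\frac{F_X(g^{-1}(y))-F_X(b)}{1-F_X(b)}\leq\frac{F_X(g^{-1}(y))-F_X(a)}{1-F_X(a)}=F_{Y\mid X\geq a}(y),
\]
and then concludes via the quantile-integral representation $\bbe[Y]=\int_0^1 F_Y^{-1}(u)\,du$. You establish the same stochastic ordering but on $X$ (via survival functions and a case split on $t\gtrless b$) and then push it through $g$; these are two phrasings of the same idea.

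Your second route---writing $\bbe[g(X)\mid X\geq a]$ as a convex combination of $\bbe[g(X)\mid a\leq X<b]$ and $\bbe[g(X)\mid X\geq b]$ and sandwiching via $g(b)$---is a genuinely different, more elementary argument that avoids any appeal to quantile transforms or stochastic dominance as a black box. It is slightly cleaner for this lemma in isolation, though the paper's CDF-comparison machinery is reused later (in Lemma~\ref{lemma:order2}), so their choice is consistent with the rest of the appendix.
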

\begin{proof}
Define $Y\dff g(X)$ and denote the CDFs of $X$ and $Y$ by $F_X(x)$ and $F_y(y)$, respectively.
\begin{align}\label{eq:lemma:condition1}
   \bbe\Big[g(X) \med X\geq b \Big] &= \int_yy\;d\big(F_{Y\med X\geq b}(y)\big)=\int_0^1\;F^{-1}_{Y\med X\geq b}(u)\;du\ .
\end{align}
On the other hand, by recalling that $g(X)$ is monotonic in $X$
\begin{align*}
    F_{Y\med X\geq b}(y)&=P(g(X)\leq y\med X\geq b)=\frac{P(b\leq X\leq g^{-1}(y))}{P(b\leq X)}=\frac{F_X(g^{-1}(y))-F_X(b)}{1-F_X(b)}\\
    &\leq\frac{F_X(g^{-1}(y))-F_X(a)}{1-F_X(a)}=P(g(X)\leq y\med X\geq a)\\
    &=F_{Y\med X\geq a}(y)\ ,
\end{align*}
which yields that
\begin{equation}\label{eq:lemma:condition2}
    \forall u\in[0,1]:\quad F^{-1}_{Y\med X\geq b}(u)\geq F^{-1}_{Y\med X\geq a}(u)\ .
\end{equation}
Equations \eqref{eq:lemma:condition1} and \eqref{eq:lemma:condition2} provide that
\begin{equation*}
\bbe\Big[g(X) \med X\geq b \Big] \geq \int_0^1\;F^{-1}_{Y\med X\geq a}(u)\;du= \int_yy\;d\big(F_{Y\med X\geq a}(y)\big)=\bbe\Big[g(X) \med X\geq a \Big]\ .
\end{equation*}
\end{proof}
Next we construct a hypothetical event under which the users in $\CH^r_n(\ell)$ satisfy some other $\sinr$ constraints. For this purpose, we introduce the set of users $\tilde \CH^r_n(\ell)$ corresponding to $\CH^r_n(\ell)$ such that
\begin{equation}\label{eq:tilde_H}
\tilde \CH^r_n(\ell)\dff\{k\med \sinr^k_n(r,\ell)\geq \min_i\beta^i_n(r)\}\ .
\end{equation}
Now consider a hypothetical channel condition for which the event $\tilde \CH^r_n(\ell)$ includes the same set of users as the the event $\CH^r_n(\ell)$ does. Under this hypothetical conditions, i.e., $\tilde \CH^r_n(\ell)=\CH^r_n(\ell)$, by recalling the characterization of $\bbe_\bH[R^r_n(\ell)\med \CH^r_n(\ell)]$ given in \eqref{eq:rate_set2} we set $b=\beta^k_n(r)$ and $a=\min_i\beta^i_n(r)$ and replace the conditions $\sinr^k_n(r,\ell)\geq \beta^k_n(r)$ by $\sinr^k_n(r,\ell)\geq \min_i\beta^i_n(r)$. By noting that $b\geq a$ from Lemma~\ref{lemma:condition} we obtain
\begin{align}
    \nonumber & \bbe_\bH[R^r_n(\ell)\med \CH^r_n(\ell)]\\
    \nonumber &\overset{\eqref{eq:lemma:condition}}{\geq} \frac{1}{|\CH^r_n(\ell)|}\sum_{k\in\tilde \CH^r_n(\ell)}\bbe_{\bH}\bigg[\log\Big(1+\sinr^k_n(r,\ell)\Big)\;\big|\;
    \forall k\in\tilde\CH^r_n(\ell):\sinr^k_n(r,\ell)\geq \min_i\beta^i_n(r); \\
    & \label{eq:rate_set3} \hspace{2.75 in}\mbox{and}\quad \forall k'\notin\tilde\CH^r_n(\ell): \sinr^{k'}_n(\ell) < \beta^{k'}_n(r)\bigg]\ .
\end{align}
Next, by taking into account the statistical independence of $\sinr^k_n(r;\ell)$ and $\sinr^{k'}_n(r;\ell)$ for $k\in \tilde\CH^r_n(\ell)$ and $k'\notin \tilde\CH^r_n(\ell)$, which is due to the statistical independence of channel matrices $\bH^{k,r}_{n,m}$ and $\bH^{k',r}_{n,m}$, changing the conditions enforced on  $\sinr^{k'}_n(r;\ell)$ for $k'\notin \tilde\CH^r_n(\ell)$ does not affect the statistics of $\sinr^{k}_n(\ell)$ for $k\in \tilde\CH^r_n(\ell)$. Therefore, slightly changing the conditions $\sinr^{k'}_n(\ell) < \beta^{k'}_n(r)$ to $\sinr^{k'}_n(\ell) < \min_i\beta^i_n(r)$ for all $k'\notin \tilde\CH^r_n(\ell)$ does not change the distribution of $\sinr^k_n(r,\ell)$ and \eqref{eq:rate_set3} can be equivalently cast as
\begin{align}
    \nonumber & \bbe_\bH[R^r_n(\ell)\med \CH^r_n(\ell)]\\ \nonumber &\geq \frac{1}{|\CH^r_n(\ell)|}\sum_{k\in\tilde\CH^r_n(\ell)}\bbe_{\bH}\bigg[\log\Big(1+\sinr^k_n(r,\ell)\Big)\;\big|\;
    \forall k\in\tilde\CH^r_n(\ell):\sinr^k_n(r,\ell)\geq \min_i\beta^i_n(r);\\
    & \label{eq:rate_set4} \hspace{2.65 in}\mbox{and}\quad \forall k'\notin\tilde \CH^r_n(\ell): \sinr^{k'}_n(\ell) < \min_i\beta^i_n(r)\bigg]\ .
\end{align}
The new condition
\begin{equation*}
    \forall k\in\tilde\CH^r_n(\ell):\sinr^k_n(r,\ell)\geq \min_i\beta^i_n(r)\quad\mbox{and}\quad \forall k'\notin \tilde\CH^r_n(\ell): \sinr^{k'}_n(\ell) < \min_i\beta^i_n(r)\ ,
\end{equation*}
implies that under this condition, the set $\tilde\CH^r_n(\ell)$ contains the $|\tilde\CH^r_n(\ell)|=|\CH^r_n(\ell)|$ largest values of the set $$\{\sinr^1_{n}(r,\ell),\dots,\sinr^K_{n}(r,\ell)\}\ ,$$ and the lower bound on $\bbe_\bH[R^r_n(\ell)\med \CH^r_n(\ell)]$ given in \eqref{eq:rate_set4} is the arithmetic average of these $|\CH^r_n(\ell)|$ largest values conditioned on the constraint that they are larger than or equal to $\min_i\beta^i_n(r)$ and the remaining $(K-|\CH^r_n(\ell)|)$ $\sinr$ values are smaller than $\min_i\beta^i_n(r)$. This interpretation can be formalized as follows.
\begin{align}
    \nonumber & \bbe_\bH[R^r_n(\ell)\med \CH^r_n(\ell)]\\
    \label{eq:rate_set5} & \geq \frac{1}{|\CH^r_n(\ell)|}\sum_{j=1}^{|\CH^r_n(\ell)|}\bbe_{\bH}\bigg[\log\Big(1+\sinr_n^{(j)}(r,\ell)\Big)\;\big|\;
    \sinr_n^{(A)}(r,\ell)\geq\min_i\beta^i_n(r)>\sinr_n^{(A+1)}(r,\ell)\bigg]\ ,
\end{align}
where we have defined $A\dff |\CH^r_n(\ell)|$.\vspace{.3 in}\\
{\bf Step 3) Removing the Condition in \eqref{eq:rate_set5}:}\\
In the final step, we further simplify the lower bound on $\bbe_\bH[R^r_n(\ell)\med \CH^r_n(\ell)]$ provided in \eqref{eq:rate_set5}. This step will be the immediate application of the following lemma.
\begin{lemma}[Conditional Expected Value of the Order Statistics]\label{lemma:order2}
Consider the set of independent random variables $\{X_1,\dots,X_K\}$ and its corresponding set of order statistics $\{X^{(1)},\dots,X^{(K)}\}$. For any increasing function $g(\cdot)$ and the real value $a$ belonging to the supports of $X_k$ we have
\begin{equation}\label{eq:lemma:order1}
    \forall\;  j\leq l:\quad \bbe\left[g\big(X^{(j)}\big) \;\big|\; X^{(l+1)}\leq a\leq X^{(l)}\right]\geq \bbe\left[g\big(X^{(j)}\big)\right]\ .
\end{equation}
\end{lemma}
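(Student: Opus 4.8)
The plan is to recast \eqref{eq:lemma:order1} as a stochastic-dominance statement and then lean on the independence of the $X_k$. Since $g$ is increasing, it suffices to prove that, conditioned on the event $E\dff\{X^{(l+1)}\leq a\leq X^{(l)}\}$ --- equivalently, on the event $\{N=l\}$, where $N\dff\#\{k:X_k\geq a\}$ counts the samples lying above $a$ --- the variable $X^{(j)}$ is stochastically larger than unconditionally, i.e. $P(X^{(j)}>y\med E)\geq P(X^{(j)}>y)$ for every $y\in\mathbb{R}$; applying this to the increasing function $g$ then gives \eqref{eq:lemma:order1}. For $y<a$ the dominance is trivial, since on $E$ we have $X^{(j)}\geq X^{(l)}\geq a>y$ because $j\leq l$, so the conditional probability is $1$. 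Thus only the range $y\geq a$ needs work.

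For $y\geq a$ I would condition further on the random index set $S\dff\{k:X_k\geq a\}$, which equals some set of size $l$ on $E$. By independence, given $\{S=s\}$ the coordinates $\{X_k\}_{k\in s}$ are independent with $X_k$ distributed as $(X_k\med X_k\geq a)$, while the other $K-l$ coordinates lie below $a\leq y$; hence $\{X^{(j)}>y\}$ becomes the event that at least $j$ of the $l$ ``boosted'' variables exceed $y$. Each $(X_k\med X_k\geq a)$ stochastically dominates $X_k$ --- its CDF on $[a,\infty)$ is $(F_k(\cdot)-F_k(a))/(1-F_k(a))\leq F_k(\cdot)$, the one-coordinate instance of the mechanism behind Lemma~\ref{lemma:condition} --- so one can couple the conditioned and unconditioned experiments coordinatewise. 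Since $(x_1,\dots,x_K)\mapsto\mathbf{1}[\text{at least }j\text{ of the }x_k\text{ exceed }y]$ is coordinatewise increasing, the half of the conditioning that asserts ``at least $l$ samples exceed $a$'' --- itself an increasing event --- is positively correlated with $\{X^{(j)}>y\}$ by the Harris (FKG) inequality for product measures, which handily gives $P(X^{(j)}>y\med X^{(l)}\geq a)\geq P(X^{(j)}>y)$.

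The main obstacle is the remaining, \emph{upper}, half of the conditioning event, $X^{(l+1)}\leq a$ --- ``at most $l$ samples exceed $a$'' --- which is a \emph{decreasing} event and therefore, by the same FKG reasoning, negatively correlated with $\{X^{(j)}>y\}$; it cannot simply be dropped. Handling it is where the exact-cardinality structure of $E$ and the hypothesis $j\leq l$ must be used jointly: working on $\{S=s\}$ one has to show that the $j$-th largest of the $l$ upper-conditioned variables indexed by $s$ still stochastically dominates the $j$-th largest of all $K$ \emph{un}conditioned variables, i.e. that the upward boost of the $l$ surviving coordinates more than compensates the loss of the other $K-l$ of them. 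I expect this ``fewer but larger'' comparison to be the delicate core of the proof; once it is in place, the remaining pieces --- the $y<a$ case, the FKG step above, and the passage from stochastic dominance back to \eqref{eq:lemma:order1} via monotonicity of $g$ --- are routine.
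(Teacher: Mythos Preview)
Your reduction to stochastic dominance is correct, as is the FKG handling of the ``at least $l$ samples above $a$'' half of the conditioning, and you are right that all the weight falls on the remaining ``fewer but larger'' comparison. The gap is that this comparison is \emph{false} in general, so the proof cannot be completed --- not along your lines, and not along any, because the lemma as stated (for \emph{all} increasing $g$ and \emph{all} $a$ in the support) does not hold. Take $K=2$, $j=l=1$, i.i.d.\ continuous $X_1,X_2$, and pick $y>a$ with $P(X_i\geq a)=0.9$ and $P(X_i\geq y)=0.45$. On $E=\{X^{(2)}\leq a\leq X^{(1)}\}$ exactly one sample lies above $a$ and that sample is $X^{(1)}$, so $P\big(X^{(1)}>y\med E\big)=P\big(X_i>y\med X_i\geq a\big)=0.5$, whereas unconditionally $P\big(X^{(1)}>y\big)=1-0.55^{2}=0.6975$. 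Stochastic dominance fails, and with $g(x)=\mathbf{1}(x>y)$ so does \eqref{eq:lemma:order1}. Your heuristic breaks precisely here: when $P(X_i\geq a)$ is large the unconditional experiment typically places \emph{more} than $l$ samples above $a$, so pinning $N=l$ is a downward, not upward, push on the top order statistics --- the ``boost'' of the $l$ survivors cannot compensate for discarding the other $K-l$.

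This is not a deficiency of your route relative to the paper's; the paper's permutation argument contains a matching error. After decomposing over orderings $\pi_i$ it silently drops the ordering constraint and closes with the identification $\sum_i P(X_{i_j}\leq b)\,P(\pi_i)=P(X^{(j)}\leq b)$, which is false --- in the example above the left side equals $0.55$ while $P(X^{(1)}\leq y)=0.3025$. Any valid version of the lemma will need an added hypothesis (for instance a smallness condition on the tails $1-F_k(a)$ that excludes the regime of the counterexample); only under such a hypothesis can either your coupling/FKG approach or the paper's combinatorial one be salvaged.
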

\begin{proof}
In the first step, by following the same line of argument as in the proof of Lemma~\ref{lemma:condition} (conditional expectation) it can be readily shown that
\begin{equation}\label{eq:lemma:order1_2}
    \forall\; j\leq l:\quad \bbe\left[g\big(X^{(j)}\big) \;\big|\; X^{(l+1)}\leq a\leq X^{(l)}\right]\geq \bbe\left[g\big(X^{(j)}\big) \;\big|\; X^{(j+1)}\leq a\leq X^{(j)}\right]\ .
\end{equation}
In the second step we also prove that
\begin{equation}\label{eq:lemma:order1_3}
    \forall\; j:\quad \bbe\left[g\big(X^{(j)}\big) \;\big|\; X^{(j+1)}\leq a\leq X^{(j)}\right]\geq \bbe\left[g\big(X^{(j)}\big)\right]\ .
\end{equation}
Equations \eqref{eq:lemma:order1_2} and \eqref{eq:lemma:order1_3} together establish the desired result. For proving \eqref{eq:lemma:order1_3} we start by showing that for any real $b\geq a$
\begin{equation}\label{eq:lemma:order2}
    \forall j:\quad P\left(X^{(j)}\leq b \;\big|\; X^{(j+1)}\leq a\leq X^{(j)}\right)\leq P\left(X^{(j)}\leq b\right)\ .
\end{equation}
For this purpose let us define $\pi_i\dff(X_{i_1},\dots,X_{i_K})$ as the $i^{th}$ possible ordered permutations of the set $\{X_1,\dots,X_K\}$ for $i=1,\dots,K!$. By averaging over all such permutations we have $\forall j$
\begin{align}\label{eq:lemma:order3}
     \nonumber &P\left(X^{(j)}\leq b \;\big|\; X^{(j+1)}\leq a\leq X^{(j)}\right)\\
     \nonumber &\hspace{.5 in}=\sum_iP\left(X^{(j)}\leq b \;\big|\; X^{(j+1)}\leq a\leq X^{(j)},\;\;(X^{(1)},\dots,X^{(K)})=\pi_i \right)P\left((X^{(1)},\dots,X^{(K)})=\pi_i\right)\\
     \nonumber &\hspace{.5 in}=\sum_iP\left(X_{i_1},\dots,X_{i_{j-1}}\leq +\infty\;\;\mbox{and}\;\; \;X_{i_j},\dots,X_{i_K}\leq b \;\big|\; X_{i_{j+1}},\dots,X_{i_{K}}\leq a\leq X_{i_1},\dots,X_{i_{j}}\right)\\
     \nonumber&\hspace{1 in}\times P\left((X^{(1)},\dots,X^{(K)})=\pi_i\right)\\
     \nonumber &\hspace{.5 in}=\sum_i\frac{P\left(\{a\leq X_{i_1},\dots,X_{i_{j-1}}\leq +\infty\} \;\;\mbox{and}\;\; \{a\leq X_{i_j}\leq b\}\;\;\mbox{and}\;\; \{X_{i_{j+1}},\dots,X_{i_K}\leq b\}\right)}{P\left(\{a\leq X_{i_1},\dots,X_{i_{j-1}}\leq +\infty\} \;\;\mbox{and}\;\; \{a\leq X_{i_j}\}\;\;\mbox{and}\;\; \{X_{i_{j+1}},\dots,X_{i_K}\leq b\}\right)}\\
     \nonumber&\hspace{1 in}\times P\left((X^{(1)},\dots,X^{(K)})=\pi_i\right)\\
     \nonumber &\hspace{.5 in}=\sum_i\frac{P\left(a\leq X_{i_j}\leq b\right)}{P\left(a\leq X_{i_j}\right)}\times P\left((X^{(1)},\dots,X^{(K)})=\pi_i\right)\\
     \nonumber &\hspace{.5 in}=\sum_i\frac{P\left(X_{i_j}\leq b\right)-P\left(X_{i_j}\leq a\right)}{P\left(a\leq X_{i_j}\right)}\times P\left((X^{(1)},\dots,X^{(K)})=\pi_i\right)\\
     \nonumber &\hspace{.5 in}\leq \sum_i\frac{P\left(X_{i_j}\leq b\right)-P\left(X_{i_j}\leq a\right)P\left(X_{i_j}\leq b\right)}{P\left(a\leq X_{i_j}\right)}\times P\left((X^{(1)},\dots,X^{(K)})=\pi_i\right)\\
     \nonumber &\hspace{.5 in}= \sum_iP\left(X_{i_j}\leq b\right)\times P\left((X^{(1)},\dots,X^{(K)})=\pi_i\right)\\
     \nonumber &\hspace{.5 in}= P\left(X^{(j)}\leq b\right)\ ,
\end{align}
which establishes the claim in \eqref{eq:lemma:order2}. By defining $Y^{(j)}\dff g(X^{(j)})$ and denoting the CDFs of $X^{(j)}$ and $Y^{(j)}$ by $F_{X^{(j)}}(x)$ and $F_{Y^{(j)}}(y)$, respectively, from \eqref{eq:lemma:order2} we immediately have
\begin{equation}\label{eq:lemma:order4}
    F_{Y^{(j)}\med X^{(j+1)}\leq a\leq X^{(j)}}(y)\leq F_{Y^{(j)}}(y)\quad\Rightarrow\quad \forall u\in[0,1]:\quad F^{-1}_{Y^{(j)}\med X^{(j+1)}\leq a\leq X^{(j)}}(u)\geq F^{-1}_{Y^{(j)}}(u)\ .
\end{equation}
Therefore, we get
\begin{align*}
    \bbe\left[g\big(X^{(j)}\big) \;\big|\; X^{(j+1)}\leq a\leq X^{(j)}\right]& =\int_y y\; d\left(F_{Y^{(j)}\med X^{(j+1)}\leq a\leq X^{(j)}}(y)\right)\\
    &=\int_{u=0}^1F^{-1}_{Y^{(j)}\med X^{(j+1)}\leq a\leq X^{(j)}}(u)\;du\\
    &\overset{\eqref{eq:lemma:order4}}{\geq} \int_{u=0}^1F^{-1}_{Y^{(j)}}(u)\;du=\int_y y\; d\left(F_{Y^{(j)}}(y)\right)= \bbe\left[g\big(X^{(j)}\big)\right] \ ,
\end{align*}
that justifies \eqref{eq:lemma:order1_3}, which in turn along with \eqref{eq:lemma:order1_2} concludes the desired result.
\end{proof}
Now, by using the results of the lemmas above and by setting $X^{(j)}=\sinr_n^{(j)}(r,\ell)$ for $j\in\{1,\dots,K\}$, $g(x)=\log(1+x)$, and $l=A=|\CH^r_n(\ell)|$, through invoking \eqref{eq:lemma:order1} (main result of Lemma~\ref{lemma:order2}) we obtain
\begin{align}\label{eq:lemma:order5}
    \nonumber \forall j\leq A:\quad \bbe_{\bH}\bigg[&\log\Big(1+\sinr_n^{(j)}(r,\ell)\Big)\;\big|\;
    \sinr_n^{(A)}(r,\ell)\geq\min_i\beta^i_n(r)>\sinr_n^{(A+1)}(r,\ell)\bigg] \\ & \geq \bbe_{\bH}\bigg[\log\Big(1+\sinr_n^{(j)}(r,\ell)\Big)\bigg]\ .
\end{align}
Combining \eqref{eq:rate_set5} and \eqref{eq:lemma:order5} provides that
\begin{align}\label{eq:rate_set6}
    \bbe_\bH[R^r_n(\ell)\med \CH^r_n(\ell)]
    \geq
    \frac{1}{|\CH^r_n(\ell)|}\sum_{j=1}^{|\CH^r_n(\ell)|}\bbe_{\bH}\bigg[\log\Big(1+\sinr_n^{(j)}(r,\ell)\Big)\bigg]\ ,
\end{align}
which is the desired result.

\section{Proof of Lemma \ref{lemma:CDF_bounds} ($\sinr$ Bounds Distributions)}
\label{app:lemma:CDF_bounds}
Based on these we define the following lower bound on $\sinr^k_n(r,\ell)$
\begin{equation}\label{eq:sinr_L}
    S^k_n(r,\ell)\dff \frac{|\bH^k_{n,n}\bomega^\ell_n|^2} {\frac{\zeta_n^2} {\eta_n^1}\left(\sum_m\sum_l|\bH^k_{n,m}\bomega^l_m|^2-|\bH^k_{n,n}\bomega^\ell_n|^2\right)+\frac{1}{\rho\eta_n^1}}\ ,
\end{equation}
and the following upper bound on $\sinr^k_n(r,\ell)$
\begin{equation}\label{eq:sinr_L}
    T^k_n(r,\ell)\dff \frac{|\bH^k_{n,n}\bomega^\ell_n|^2} {\frac{\zeta_n^1} {\eta_n^2}\left(\sum_m\sum_l|\bH^k_{n,m}\bomega^l_m|^2-|\bH^k_{n,n}\bomega^\ell_n|^2\right)+\frac{1}{\rho\eta_n^2}}\ .
\end{equation}
Let $Y\dff|\bH^{k,r}_{n,n}\bomega^r_n(\ell)|^2$. Since $\bH^{k,r}_{n,n}\bomega^r_n(\ell)\sim{\cal CN}(0,1)$, $Y$ has exponential distribution with unit variance. Also define  $Z\dff\sum_m\sum_{r'}\sum_{\ell'}|\bH^{k,r}_{n,n}\bomega^r_n(\ell)|^2-|\bH^k_{n,n}\bomega^\ell_n|^2$ which is the summation of $MQN_t-1$ independent exponentially distributed random variables each with unit variance. Therefore, $Z$ has a ${\rm Gamma}(MQN_t-1,1)$ distribution. By denoting the probability density functions (PDF) of $Z$ and $Y$ by
\begin{eqnarray*}
  f_Y(y) &=& e^{-z}\quad \mbox{and}\quad f_Z(z) = \frac{z^{MQN_t-2}\;e^{-z}}{(MQN_t-2)!}\ ,
\end{eqnarray*}
the PDF of $T^k_n(r,\ell)=\frac{Y}{1/\rho\eta_n^2+\frac{\zeta_n^1} {\eta_n^2}Z}$, denoted by $f_T(t)$ is
\begin{align*}
  f_T(t) &= \int_0^{\infty}f_{T\med Z}(t\med z)f_Z(z)dz = \int_0^{\infty}\left(\frac{1}{\rho\eta_n^2}+\frac{\zeta_n^1} {\eta_n^2}\; z\right)e^{\left(-\frac{t}{\rho\eta_n^2}+\frac{\zeta_n^1} {\eta_n^2}\; tz\right)}\cdot \frac{z^{MQN_t-2}\;e^{-z}}{(MQN_t-2)!}\;dz\\
  &=\frac{e^{-t/\rho\eta_n^2}}{\left(\frac{\zeta_n^1} {\eta_n^2}t+1\right)^{MQN_t}}
 \left[\frac{\frac{\zeta_n^1} {\eta_n^2}t+1}{\rho\eta_n^2}+\frac{\zeta_n^1} {\eta_n^2}(MQN_t-1)\right],
\end{align*}
where the last step holds as $\int_0^\infty e^{-u}u^M=M!$. Some simple manipulations the CDF of $T^k_n(r,\ell)$ is obtained as
\begin{equation*}
    F_2(x;n)=\int_0^xf_T(t)\;dt= 1-\frac{e^{-x/\rho\eta^2_n}}{(\frac{\zeta_n^1} {\eta_n^2}\; x+1)^{MQN_t-1}}\ .
\end{equation*}
$F_1(x;n)$ can be found by following the same line of argument.

\section{Proof of Lemma~\ref{lemma:order} (Ordered $\sinr$ Bounds)}
\label{app:lemma:order}
By induction we show that for given $n,r,\ell$, for all $j\in\{1,\dots, K\}$, we have $S^{(j)}_n(r,\ell)\leq \sinr^{(j)}_n(r,\ell)$.\\
{\bf 1)} For $j=1$ we have
\begin{equation*}
    \sinr^{(1)}_n(r,\ell)=\max_k\sinr^{k}_n(r,\ell)\geq\max_kS^{k}_n(r,\ell)=S^{(1)}_n(r,\ell).
\end{equation*}
{\bf 2)} Assumption: For some $j=l$ we have
$S^{(l)}_n(r,\ell)\leq \sinr^{(l)}_n(r,\ell)$.\\
{\bf 3)} Claim: For $j=l+1$ we show that
$S^{(l+1)}_n(r,\ell)\leq \sinr^{(l+1)}_n(r,\ell)$.\\
Based on the definition of $S^k_n(r,\ell)$, each of the $(K-l)$ terms $\{\sinr^{(l+1)}_n(r,\ell), \dots, \sinr^{(K)}_n(r,\ell)\}$ is
greater than one corresponding element in the set
$\{S^1_n(r,\ell),\dots,S^K_n(r,\ell)\}$. Therefore, there cannot be more than $l$ elements in the set $\{S^1_n(r,\ell),\dots,S^K_n(r,\ell)\}$ that are all greater than all the members of
$\{\sinr^{(l+1)}_n(r,\ell), \dots, \sinr^{(K)}_n(r,\ell)\}$. Now, by contradiction assume that $S^{(l+1)}_n(r,\ell)> \sinr^{(l+1)}_n(r,\ell)$. This assumption provides that
\begin{equation*}
    S^{(1)}_n(r,\ell)\geq S^{(2)}_n(r,\ell)\geq\dots\geq S^{(l+1)}_n(r,\ell)> \sinr^{(l+1)}_n(r,\ell)\geq\sinr^{(l+2)}_n(r,\ell)\geq \sinr^{(K)}_n(r,\ell)\ ,
\end{equation*}
which indicates that we have found $l+1$ members of the set $\{S^1_n(r,\ell),\dots,S^K_n(r,\ell)\}$ that are greater than all the $(K-l)$ members of the set $\{\sinr^{(l+1)}_n(r,\ell), \dots, \sinr^{(K)}_n(r,\ell)\}$ which contradicts with what we found earlier. Hence, we should have $S^{(l+1)}_n(r,\ell)\leq \sinr^{(l+1)}_n(r,\ell)$. The proof for the inequality $T^{(j)}_n(r,\ell)\geq \sinr^{(j)}_n(r,\ell)$ follows the same line of argument.

\section{Proof of Lemma \ref{lemma:bound} (Bounds on $\bar R_n$)}
\label{app:lemma:bound}
The proof heavily depends on the result of the following lemma.
\begin{lemma}
\label{lemma:f}
For a real variable $x\in[0,1]$ and integer variables $K$ and $j$,
$0\leq j\leq K-1$, the function
\begin{equation*}
    f(x,j)\dff\sum_{i=0}^j{K\choose i}x^{K-i}(1-x)^i
\end{equation*}
is increasing in $x$.
\end{lemma}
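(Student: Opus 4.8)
The plan is to recognize $f(x,j)$ as the upper tail of a binomial law and then verify monotonicity by differentiating. Re-indexing the sum via $m\dff K-i$ gives
\[
  f(x,j)=\sum_{m=K-j}^{K}{K\choose m}x^{m}(1-x)^{K-m}=P\big(\mathrm{Bin}(K,x)\geq K-j\big),
\]
i.e.\ the probability that a binomial random variable with $K$ trials and success probability $x$ is at least $K-j$. Intuitively this tail can only grow as the per-trial success probability grows, and making this rigorous via the derivative also yields a convenient closed form.

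First I would differentiate $f(\cdot,j)$ term by term. Using ${K\choose m}m=K{K-1\choose m-1}$ and ${K\choose m}(K-m)=K{K-1\choose m}$, the derivative of the $m$-th summand equals $K\big[h(m-1)-h(m)\big]$, where I set $h(m)\dff{K-1\choose m}x^{m}(1-x)^{K-1-m}$ for $0\leq m\leq K-1$ and $h(m)\dff 0$ otherwise (so that the boundary terms $m=0$ and $m=K$ are covered as well, with no negative power of $(1-x)$ ever appearing). Summing this over $m$ from $K-j$ to $K$ telescopes, leaving
\[
  \frac{\partial f}{\partial x}(x,j)=K\,h(K-j-1)=K{K-1\choose K-j-1}x^{K-j-1}(1-x)^{j}.
\]
Since $0\leq j\leq K-1$, the coefficient ${K-1\choose K-j-1}$ is well defined and positive, and $x^{K-j-1}(1-x)^{j}\geq 0$ on $[0,1]$; hence $\partial f/\partial x\geq 0$, so $f(\cdot,j)$ is nondecreasing (in fact strictly increasing on $(0,1)$). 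As sanity checks one recovers $f(x,0)=x^{K}$ with derivative $Kx^{K-1}$, and $f(x,K-1)=1-(1-x)^{K}$ with derivative $K(1-x)^{K-1}$.

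There is no substantive obstacle here; the only care needed is the telescoping bookkeeping and the convention that kills the spurious boundary term. As an alternative requiring no computation, one may argue by coupling: for $x_{1}<x_{2}$, realize both binomials on a common space from i.i.d.\ uniforms $U_{1},\dots,U_{K}$ by $\sum_{t}\mathbf{1}\{U_{t}\leq x_{1}\}$ and $\sum_{t}\mathbf{1}\{U_{t}\leq x_{2}\}$, which are then ordered pointwise, so $f(x_{1},j)=P(\mathrm{Bin}(K,x_{1})\geq K-j)\leq P(\mathrm{Bin}(K,x_{2})\geq K-j)=f(x_{2},j)$. I would present the differentiation argument as the main proof, since the explicit formula for $\partial f/\partial x$ may be reusable, and mention the coupling argument only as a remark.
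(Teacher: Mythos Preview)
Your proof is correct. The telescoping computation is sound, and the convention $h(m)=0$ for $m\notin\{0,\dots,K-1\}$ cleanly handles the boundary term at $m=K$ (where $\binom{K-1}{K}=0$). The closed form $\partial f/\partial x = K\binom{K-1}{K-j-1}x^{K-j-1}(1-x)^{j}$ is right and makes nonnegativity immediate.

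Your route differs genuinely from the paper's. The paper does not obtain a closed form for the derivative; instead it first observes the symmetry $f(x,j)=1-f(1-x,K-j-1)$, which gives $f'(x,j)=f'(1-x,K-j-1)$ and reduces the task to showing $f'(x,j)\geq 0$ for $x\leq\tfrac12$. It then splits on $j$: for $j\leq\lfloor K/2\rfloor$ it factors each summand of the derivative as $\binom{K}{i}x^{K-i-1}(1-x)^{i-1}\big[K(1-x)-i\big]$ and argues the bracket is nonnegative termwise because $i\leq j\leq K/2$ and $x\leq\tfrac12$; for $j>\lfloor K/2\rfloor$ it pairs terms using the symmetry and reduces to the first case. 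Your argument avoids the case split and the termwise bounding entirely, and the explicit derivative you obtain is strictly more informative (and reusable, as you note). The coupling remark is a nice probabilistic sanity check, though it is not needed once the closed form is in hand.
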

\begin{proof}
See Appendix \ref{app:lemma:f}.
\end{proof}

Note that from the properties of the order statistics \cite{Arnold:book} we know that
\begin{align}
  \label{eq:CDF:Lj} F^{(j)}_1(x;n) &\dff \sum_{i=0}^{j-1}{K\choose
  i}\Big(F_1(x;n)\Big)^{K-i}\Big(1-F_1(x;n)\Big)^i\ ,\\
  \label{eq:CDF:Uj} F^{(j)}_2(x;n) &\dff \sum_{i=0}^{j-1}{K\choose
  i}\Big(F_2(x;n)\Big)^{K-i}\Big(1-F_2(x;n)\Big)^i\\
  \label{eq:CDF_G}  G^{(j)}(x) &\dff  \sum_{i=0}^{j-1}{K\choose
    i}\Big(G(x)\Big)^{K-i}\Big(1-G(x)\Big)^i\ .
\end{align}
Also by referring to the definitions of $G(x)$ and $F_1(x;n)$ we find that
\begin{eqnarray*}
  G\bigg((MQN_t-1)\Big(\frac{\zeta^2_n}{\eta^1_n}(1+x)\Big)+\frac{x}{\rho\eta^1_n}\bigg) &=&
  1-\exp\bigg[-\frac{x}{\rho\eta^1_n}-(MQN_t-1)\underset{\geq \ln\Big(1+\frac{\zeta^2_n}{\eta^1_n}x\Big)}{\underbrace{\Big(\frac{\zeta^2_n}{\eta^1_n}(1+x)\Big)}}\bigg] \\
  &\geq&1-\frac{e^{-x/\rho\eta^1_n}}{\Big(1+\frac{\zeta^2_n}{\eta^1_n}x\Big)^{MQN_t-1}}=
  F_1(x;n)\ .
\end{eqnarray*}
By applying Lemma~\ref{lemma:f}, for $j=1,\dots, K$, we have
\begin{eqnarray*}
   G^{(j)}\bigg((MQN_t-1)\Big(\frac{\zeta^2_n}{\eta^1_n}(1+x)\Big)+\frac{x}{\rho\eta^1_n}\bigg)&=&
   f\bigg(G\Big((MQN_t-1)\Big(\frac{\zeta^2_n}{\eta^1_n}(1+x)\Big)+\frac{x}{\rho\eta^1_n}\Big),j-1\bigg)\\
   &\geq& f\Big(F_1(x;n),j-1\Big)\\
   &=& F^{(j)}_1(x;n)\ ,
\end{eqnarray*}
and consequently,
\begin{eqnarray}
    \nonumber G^{K}\bigg((MQN_t-1)\Big(\frac{\zeta^2_n}{\eta^1_n}(1+x)\Big)+\frac{x}{\rho\eta^1_n};n,r,\ell\bigg)&=&
    \label{eq:reverse}\sum_{j=1}^Kq^j_n(r,\ell)G^{(j)}\bigg((MQN_t-1)\Big(\frac{\zeta^2_n}{\eta^1_n}(1+x)\Big)+ \frac{x}{\rho\eta^1_n}\bigg)\\
    &\geq&\sum_{j=1}^Kq^j_n(r,\ell) F^{(j)}_1(x;n)=F^{K}_1(x;n,r,\ell).
\end{eqnarray}
Define $u\dff F^{K}_1(x;n,r,\ell)$. Since $F^{(j)}_1(x;n)$, $j=1,
\dots, K$, are increasing functions in $x$, the function $F^{K}_1(x;n,r,\ell)$ is also increasing in $x$ and thereof convertible. Therefore, (\ref{eq:reverse}) can be rewritten as $x=(F^K_1)^{-1}(u;n,r,\ell)$.
By noting $G^K(x)$ is also invertible we have
\begin{equation*}
     G^{K}\bigg((MQN_t-1)\Big(\frac{\zeta^2_n}{\eta^1_n}(1+(F^K_1)^{-1}(u;n,r,\ell))\Big) +\frac{(F^K_1)^{-1}(u;n,r,\ell)}{\rho\eta^1_n};\ell\bigg)\geq
     u\ ,
\end{equation*}
or equivalently
\begin{equation}
    \label{eq:compare}
    \bigg(\rho\zeta^2_n(MQN_t-1)+1\bigg)\bigg(1+(F^K_1)^{-1}(u;n,\ell)\bigg)\geq
    1+\rho\eta^1_n(G^K)^{-1}(u;n,r,\ell)\ .
\end{equation}
Using (\ref{eq:compare}) we find
\begin{align}
  \nonumber \sum_{r,\;\ell}\int_0^{\infty}& \log(1+x)\;dF^K_1(x;n,r,\ell) + QN_t\log \Big(\rho\zeta^2_n(MQN_t-1)+1\Big)\\
  \nonumber &=\sum_{r,\;\ell}\int_0^1\log\bigg(1+(F^K_1)^{-1}(u;n,r,\ell)\bigg)du+QN_t\log \Big(\rho\zeta^2_n(MQN_t-1)+1\Big)\int_0^1du\\
  \nonumber &=\sum_{r,\;\ell}\int_0^1\log\left[\bigg(\rho\zeta^2_n(MQN_t-1)+1\bigg)\bigg(1+(F^K_1)^{-1}(u;n,r,\ell)\bigg)\right]du\\
  \nonumber &\geq \sum_{r,\;\ell}\int_0^1\log\Big(1+\rho\eta^1_n(G^K)^{-1}(u;n,r,\ell)\Big)du\\
  \label{eq:low_scale} &=\sum_{r,\;\ell}\int_0^{\infty}\log(1+\rho\eta^1_nx)\;dG^K(x;n,r,\ell)\ .
\end{align}
By substituting (\ref{eq:low_scale}) into (\ref{eq:rate_sum_bound3})
we get the following lower bound on $\bar R_n$
\begin{equation}
    \label{eq:low_scale1}
    \sum_{r,\;\ell}\int_0^{\infty}\log(1+\rho\gamma_1x)\;dG^K(x;n,r,\ell)-QN_t\log \Big(\rho\zeta^2_n(MQN_t-1)+1\Big)
    \leq \bar R_n.
\end{equation}
Also it can be easily verified that $\forall x\in\mathbb{R}$,
$F_2(x;n)\geq G\Big(\frac{x}{\rho\eta^2_n}\Big)$. Hence,
\begin{align*}
    F^K_2(x;n,r,\ell)&=\sum_{j=1}^Kq_j^n(r,\ell)\; F^{(j)}_2(x;n)\\
    &=\sum_{j=1}^Kq_j^n(r,\ell)\; f\Big(F_2(x;n),j-1\Big)
    \\
    &\geq \sum_{j=1}^Kq_j^n(r,\ell) \;f\bigg(G\Big(\frac{x}{\rho\eta^2_n}\Big),j-1\bigg)\\
    &= \sum_{j=1}^Kq_j^n(r,\ell)\; G^{(j)}\Big(\frac{x}{\rho\eta^2_n}\Big)=G^K\Big(\frac{x}{\rho\eta^2_n};n,r,\ell\Big),
\end{align*}
By defining $u\dff \frac{F^K_2(x;n,\ell)}{\rho\eta^2_n}$ and
following the same lines as above we ge
\begin{equation}
    \label{eq:high_scale}
    \sum_{r,\;\ell}\int_0^{\infty}\log(1+x)\;dF^K_2(x;n,r,\ell)\leq
    \sum_{r,\;\ell}\int_0^{\infty}\log(1+\rho\gamma_2x)\;dG^K(x;n,r,\ell)\ .
\end{equation}
Inequalities in (\ref{eq:low_scale}) and (\ref{eq:high_scale})
together give rise to the following upper bound on $\bar R_n$
\begin{equation}
    \label{eq:high_scale1}
    \bar R_n\leq
    \sum_{r,\;\ell}\int_0^{\infty}\log(1+\rho\gamma_2x)\;dG^K(x;n,r,\ell)\ .
\end{equation}
Combining the lower and upper bounds given in
(\ref{eq:low_scale1})~and~(\ref{eq:high_scale1}) establishes the
desired result.

\section{Proof of Lemma \ref{lemma:exp_scaling2} (Scaling)}
\label{app:lemma:exp_scaling2}
We start by citing the following theorem.
\begin{theorem}\label{th:limit}
\emph{\cite[Theorem 4]{Sanayei:WCOM07}} Let $\{X_n\}_{n=1}^K$ be a family of positive random variables with finite mean $\mu_K$ and variance $\sigma^2_K$, also $\mu_K\rightarrow\infty$ and
$\frac{\sigma_K}{\mu_K}\rightarrow 0$ as $K\rightarrow\infty$. Then, for all $\alpha\in\mathbb{R}^+$ we have
\begin{equation*}
    \bbe\Big[\log(1+\alpha X_K)\Big]\doteq
    \log\Big(1+\alpha\bbe[X_K]\Big).
\end{equation*}
\end{theorem}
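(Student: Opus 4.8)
The plan is to establish the claimed asymptotic equality by sandwiching the ratio $\bbe[\log(1+\alpha X_K)]/\log(1+\alpha\mu_K)$ between two quantities that both converge to $1$, where $\mu_K\dff\bbe[X_K]$. The upper bound is immediate: since $\log(1+\alpha x)$ is concave in $x$, Jensen's inequality gives $\bbe[\log(1+\alpha X_K)]\leq\log(1+\alpha\bbe[X_K])=\log(1+\alpha\mu_K)$, so the ratio is at most $1$ for every $K$. Hence $\limsup_K \bbe[\log(1+\alpha X_K)]/\log(1+\alpha\mu_K)\leq 1$, and the entire burden falls on the matching lower bound.

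For the lower bound, the plan is to exploit the hypothesis $\sigma_K/\mu_K\to 0$, which forces $X_K$ to concentrate around its mean. Fix $\epsilon\in(0,1)$ and apply Chebyshev's inequality to control the complement of the good event $\{X_K\geq(1-\epsilon)\mu_K\}$:
\begin{equation*}
  P\big(X_K<(1-\epsilon)\mu_K\big)\leq P\big(|X_K-\mu_K|>\epsilon\mu_K\big)\leq\frac{\sigma_K^2}{\epsilon^2\mu_K^2}\ .
\end{equation*}
Because $X_K>0$ guarantees $\log(1+\alpha X_K)\geq 0$ everywhere, the expectation can only decrease when restricted to the good event, and on that event the integrand is at least $\log(1+\alpha(1-\epsilon)\mu_K)$ by monotonicity. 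Therefore
\begin{equation*}
  \bbe[\log(1+\alpha X_K)]\geq\log\big(1+\alpha(1-\epsilon)\mu_K\big)\cdot\Big(1-\frac{\sigma_K^2}{\epsilon^2\mu_K^2}\Big)\ .
\end{equation*}
Dividing by $\log(1+\alpha\mu_K)$ and letting $K\to\infty$ with $\epsilon$ held fixed, the Chebyshev factor tends to $1$ since $\sigma_K/\mu_K\to 0$, while the logarithmic ratio $\log(1+\alpha(1-\epsilon)\mu_K)/\log(1+\alpha\mu_K)$ also tends to $1$: as $\mu_K\to\infty$ both logarithms behave like $\log\mu_K$ plus a bounded constant, so the additive shift $\log(1-\epsilon)$ is asymptotically negligible against the diverging $\log\mu_K$. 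This yields $\liminf_K \bbe[\log(1+\alpha X_K)]/\log(1+\alpha\mu_K)\geq 1$, and combined with the Jensen upper bound the ratio converges to $1$, which is precisely $\bbe[\log(1+\alpha X_K)]\doteq\log(1+\alpha\mu_K)$.

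The step I expect to demand the most care is the lower bound, and within it the verification that shrinking the mean from $\mu_K$ to $(1-\epsilon)\mu_K$ perturbs only the $O(1)$ term of the logarithm. The crucial structural feature making the argument succeed is that the reward is logarithmic: since $\log(1+\alpha\mu_K)$ diverges only like $\log\mu_K$, a multiplicative perturbation of its argument by any fixed factor $(1-\epsilon)$ contributes merely an additive $O(1)$ shift, which is swamped by the divergence. This is exactly why the crude second-moment concentration supplied by Chebyshev's inequality — which pins down $X_K$ only up to multiplicative constants — is already sufficient, and why no finer tail information about $X_K$ beyond the condition $\sigma_K/\mu_K\to 0$ is needed.
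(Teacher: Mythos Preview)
Your argument is correct. The upper bound via Jensen is immediate, and the lower bound via Chebyshev plus the observation that a fixed multiplicative shrinkage of the argument contributes only an additive $O(1)$ to the logarithm is exactly the right mechanism; the positivity of $X_K$ ensuring $\log(1+\alpha X_K)\geq 0$ is the small but essential point that lets you discard the bad event cleanly.

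As for comparison with the paper: there is nothing to compare against. The paper does not prove this statement at all --- it is quoted verbatim as \emph{[Theorem~4]{Sanayei:WCOM07}} and invoked as a black box inside the proof of Lemma~\ref{lemma:exp_scaling2}. So you have supplied a self-contained proof where the paper only cites one. Your Jensen--Chebyshev sandwich is in fact the standard way such ``$\log$ of a concentrating random variable'' statements are established, and it is almost certainly what the cited reference does as well.
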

Now, for any given number of users $K$ per super-cell, we define a random variable $X_K$ that is distributed as $X_K\sim W^K(x)$. Also for $j=1,\dots, K$ we define
\begin{eqnarray*}
  \mu_{(j)} &\dff& \int_0^{\infty}x\;dW^{(j)}(x)\ ,\\
  \mbox{and}\;\;\;\sigma^2_{(j)} &\dff&
  \int_0^{\infty}\Big(x-\mu_{(i)}\Big)^2\;dW^{(j)}(x)\ ,\\
  \mbox{and}\;\;\;\mu_K &\dff&
  \bbe[X_K]=\int_0^{\infty}x\;dW^K(x)\\
  &=&\sum_{j=1}^KQ_j
  \int_0^{\infty}x\;dW^{(j)}(x)=\sum_{j=1}^KQ_j\mu_{(j)}\ .
\end{eqnarray*}
As discussed in~\cite[Sec. 4.6]{Arnold:book, Sanayei:IT07}, for ordered exponentially distributed random variables $X_K$ we have
\begin{equation}
    \label{eq:sigma_n}
    \sigma^2_K<2+2\mu_{(1)}\Big(\mu_{(1)}-\mu_{K}\Big)\ ,
\end{equation}
\begin{equation}
    \label{eq:mu_1}
    \mbox{and}\;\;\;\log K+\zeta+\frac{1}{2(K+1)}\leq\mu_{(1)}\leq\log
    K+\zeta+\frac{1}{2K}\ ,
\end{equation}
where by noting that $\zeta\approx0.577$ is the Euler-Mascheroni constant we have $\mu_{(1)}\doteq \log K$. Also
\begin{equation*}
     \mu_{(1)}-\log\bigg(\sum_{j=1}^KjQ_j\bigg)-\zeta-0.5\leq\mu_K\leq\mu_{(1)}.
\end{equation*}
By taking into account the constraint in (\ref{eq:Q1}), as
$K\rightarrow\infty$
\begin{equation}
    \label{eq:mu_n}
    1-\frac{\log\bigg(\sum_{j=1}^KjQ_j\bigg)-\zeta-0.5}{\log K}\leq\mu_K\leq 1.
\end{equation}
Equations~(\ref{eq:mu_1})~and~(\ref{eq:mu_n}) together show that
\begin{equation}
    \label{eq:mu_n2}
    \mu_{(1)}\doteq\mu_K\doteq\log K,
\end{equation}
which also implies that $\mu_K\rightarrow\infty$. Taking into
account~(\ref{eq:sigma_n})~ and~(\ref{eq:mu_n2}) we also conclude that
$\lim_{K\rightarrow\infty}\frac{\sigma_K}{\mu_K}=0$ and therefore the
conditions of Theorem~\ref{th:limit} are satisfied. Hence, from
Theorem~\ref{th:limit}
\begin{eqnarray*}
    \int_0^{\infty}\log(1+ax)\;dW^K(x)&=&\bbe\Big[\log(1+a X_K)\Big] \doteq \log\Big(1+a\bbe[X_K]\Big)\\
     &=&\log\Big(1+a\mu_K\Big)\doteq\log\log K+\log(a)\ .
\end{eqnarray*}

\section{Proof of Lemma \ref{lemma:f}}
\label{app:lemma:f}
By the expansion of $\Big(x+(1-x)\Big)^K$ we have
\begin{align*}
    f(x,j)=& 1-\sum_{i=j+1}^K{K\choose i}x^{K-i}(1-x)^i= 1-\sum_{i=j+1}^K{K\choose K-i}x^{K-i}(1-x)^i\\
    &= 1-\sum_{i=0}^{K-(j+1)}{K\choose i}(1-x)^{K-i}x^i=1-f(1-x,K-j-1),
\end{align*}
where it can be concluded that $f'(x,i)=f'(1-x,K-j-1)$. So it is
sufficient to show that $f'(x,i)\geq 0$ for $x\leq \frac{1}{2}$ and
for all $j=1,\dots, K-1$. For this purpose we consider two cases of
$j\leq\lfloor\frac{K}{2}\rfloor$ and
$j>\lfloor\frac{K}{2}\rfloor$. \\
{\underline{\bf Case 1: $j\leq\lfloor\frac{K}{2}\rfloor$}}
\begin{eqnarray}
  \nonumber f'(x,j) &=& \sum_{i=0}^{j}{K\choose i}(K-i)x^{K-i-1}(1-x)^i-{K\choose
  i}ix^{K-i}(1-x)^{i-1}\\
  \label{eq:lemma:f1}&=& \sum_{i=0}^{j}{K\choose
  i}(K-i)x^{K-i-1}(1-x)^{i-1}\Big[K(1-x)-i\Big],
\end{eqnarray}
where it can be shown that for $x\leq \frac{1}{2}$
\begin{equation*}
    K(1-x)-i\geq K(1-x)-j\geq
    K(1-x)-\frac{K}{2}=\frac{K}{2}(1-2x)\geq 0.
\end{equation*}
{\underline{\bf Case 2: $j>\lfloor\frac{K}{2}\rfloor$}}\\
Define $a_i=1-\frac{1}{2}\delta(\lfloor\frac{K}{2}\rfloor-i)$, where
$\delta(\cdot)$ is the Dirac delta function. Therefore, we get
\begin{equation*}
   f(x,j) = f(x,K-j-1)+\sum_{i=K-j}^{\lfloor\frac{K}{2}\rfloor}a_i{K\choose
   i}\Big[x^{K-i}(1-x)^i+x^i(1-x)^{K-i}\Big].
\end{equation*}
For $x\leq \frac{1}{2}$ we get
\begin{eqnarray}
   \nonumber f'(x,j) &=& f'(x,K-j-1)\\
   \nonumber &+&\sum_{i=K-i}^{\lfloor\frac{K}{2}\rfloor}a_i{K\choose i}
   x^{K-i-1}(1-x)^{i-1}\Big[K-i-Kx\Big]+\underset{\geq x^{K-i-1}(1-x)^{i-1}}{\underbrace{x^{i-1}(1-x)^{K-i-1}}}\Big[i-Kx\Big]\\
   \label{eq:lemma:f2} &\geq& f'(x,\underset{\leq \lfloor\frac{K}{2}\rfloor}{\underbrace{K-j-1}}) + \sum_{i=K-i}^{\lfloor\frac{K}{2}\rfloor}a_i{K\choose
   i} x^{K-i-1}(1-x)^{i-1}\underset{\geq 0}{\underbrace{\Big[K-2Kx\Big]}}\geq 0
\end{eqnarray}
From~(\ref{eq:lemma:f1})~and~(\ref{eq:lemma:f2}) it is concluded
that for $x\leq \frac{1}{2}$, $f(x,j)$ is an increasing function of
$x$, which provides the proof.

\bibliographystyle{IEEEtran}
\bibliography{IEEEabrv,MIMO_Network}

\begin{thebibliography}{10}
\providecommand{\url}[1]{#1}
\csname url@samestyle\endcsname
\providecommand{\newblock}{\relax}
\providecommand{\bibinfo}[2]{#2}
\providecommand{\BIBentrySTDinterwordspacing}{\spaceskip=0pt\relax}
\providecommand{\BIBentryALTinterwordstretchfactor}{4}
\providecommand{\BIBentryALTinterwordspacing}{\spaceskip=\fontdimen2\font plus
\BIBentryALTinterwordstretchfactor\fontdimen3\font minus
  \fontdimen4\font\relax}
\providecommand{\BIBforeignlanguage}[2]{{%
\expandafter\ifx\csname l@#1\endcsname\relax
\typeout{** WARNING: IEEEtran.bst: No hyphenation pattern has been}%
\typeout{** loaded for the language `#1'. Using the pattern for}%
\typeout{** the default language instead.}%
\else
\language=\csname l@#1\endcsname
\fi
#2}}
\providecommand{\BIBdecl}{\relax}
\BIBdecl

\bibitem{Caire:IT03}
G.~Caire and S.~Shamai, ``On the achievable throughput of a multiantenna
  {G}aussian broadcast channel,'' \emph{{IEEE} Trans. Inf. Theory}, vol.~49,
  no.~7, pp. 1691-- 1706, Jul. 2003.

\bibitem{Goldsmith:JSAC03}
A.~Goldsmith, S.~A. Jafar, N.~Jindal, and S.~Vishwanath, ``Capacity limits of
  {MIMO} channels,'' \emph{{IEEE} J. Sel. Areas Commun.}, vol.~21, no.~5, pp.
  684--702, Jun. 2003.

\bibitem{Sharif:COM07}
M.~Sharif and B.~Hassibi, ``A comparison of time-sharing, {DPC}, and
  beamforming for {MIMO} broadcast channels with many users,'' \emph{{IEEE}
  Trans. Commun.}, vol.~55, no.~1, pp. 11--15, Jan. 2007.

\bibitem{Gesbert:SPM07}
D.~Gesbert, M.~Kountouris, R.~W. Heath, C.-B. Chae, and T.~Sälzer, ``Shifting
  the {MIMO} paradigm,'' \emph{{IEEE} Signal Process. Mag.}, vol.~24, no.~5,
  pp. 36--46, Sept. 2007.

\bibitem{Zhang:WCOM08}
H.~Zhang, N.~B. Mehta, A.~F. Molisch, J.~Zhang, and H.~Dai, ``Asynchronous
  interference mitigation in cooperative base station systems,'' \emph{{IEEE}
  Trans. Wireless Commun.}, vol.~7, no.~1, pp. 155--165, Jan. 2008.

\bibitem{Huang:WCOM09}
H.~Huang, M.~Trivellato, A.~Hottinen, M.~Shafi, P.~Smith, and R.~Valenzuela,
  ``Increasing downlink cellular throughput with limited network {MIMO}
  coordination,'' \emph{{IEEE} Trans. Wireless Commun.}, vol.~8, no.~6, pp.
  2983--2989, June 2009.

\bibitem{Zhang:WCOM09}
J.~Zhang, R.~Chen, J.~G. Andrews, A.~Ghosh, and R.~W. Heath, ``Networked {MIMO}
  with clustered linear precoding,'' \emph{{IEEE} Trans. Wireless Commun.},
  vol.~8, no.~4, pp. 1910--1921, April 2009.

\bibitem{Sharif:IT05}
M.~Sharif and B.~Hassibi, ``On the capacity of {MIMO} broadcast channel with
  partial side information,'' \emph{{IEEE} Trans. Inf. Theory}, vol.~51, no.~2,
  pp. 506--522, Feb. 2005.

\bibitem{Yoo:JSAC07}
T.~Yoo, N.~Jindal, and A.~Goldsmith, ``Multi-antenna downlink channels with
  limited feedback and user selection,'' \emph{{IEEE} J. Sel. Areas Commun.},
  vol.~27, no.~7, pp. 1478--1491, Sep. 2007.

\bibitem{Huang:SP07}
K.~Huang, R.~W. Heath, and J.~G. Andrews, ``Space division multiple access with
  a sum feedback rate constraint,'' \emph{{IEEE} Trans. Signal Process.},
  vol.~55, no.~7, pp. 3879--3891, Jul. 2007.

\bibitem{Huang:VT09}
K.~Huang, J.~G. Andrews, and R.~W. Heath, ``Performance of orthogonal
  beamforming for {SDMA} with limited feedback,'' \emph{{IEEE} Trans. Veh.
  Technol.}, vol.~58, no.~1, pp. 152--164, Feb. 2009.

\bibitem{Sanayei:IT07}
S.~Sanayei and A.~Nosratinia, ``Opportunistic downlink transmission with
  limited feedback,'' \emph{{IEEE} Trans. Inf. Theory}, vol.~53, no.~11, pp.
  4363--4372, Nov. 2007.

\bibitem{Sanayei:WCOM07}
------, ``Opportunistic beamforming with limited feedback,'' \emph{{IEEE}
  Trans. Wireless Commun.}, vol.~6, no.~8, pp. 2765--2771, Aug. 2007.

\bibitem{Tang:COML:05}
T.~Tang and R.~W. Heath, ``Opportunistic feedback for downlink multiuser
  diversity,'' \emph{{IEEE} Commun. Lett.}, vol.~9, no.~10, pp. 948--950, Oct.
  2005.

\bibitem{Tang:COM:07}
T.~Tang, R.~W. Heath, S.~Cho, and S.~Yun, ``Opportunistic feedback in multiuser
  {MIMO} systems with linear receivers,'' \emph{{IEEE} Trans. Commun.},
  vol.~55, no.~5, pp. 1020--1032, May 2007.

\bibitem{Shin:arxive08}
W.-Y. Shin, S.-W. Jeon, N.~Devroye, M.~H. Vu, S.-Y. Chung, Y.~H. Lee, and
  V.~Tarokh, ``Improved capacity scaling in wireless networks with
  infrastructure,'' \emph{{IEEE} Trans. Inf. Theory}, 2008, submitted,
  available at \url{http://arxiv.org/pdf/0811.0726}.

\bibitem{Liu:INFOCOM08}
B.~Liu, Z.~Liu, and D.~Towsley, ``On the capacity of hybrid wireless
  networks,'' in \emph{Proc. IEEE International Conference on Computer
  Communications (INFOCOM)}, San Francisco, CA, April 2003, pp. 1543 -- 1552.

\bibitem{Hassibi:IT02}
B.~Hassibi and T.~L. Marzetta, ``Multiple-antennas and isotropically random
  unitary inputs: the received signal density in closed form,'' \emph{{IEEE}
  Trans. Inf. Theory}, vol.~48, no.~6, pp. 1473--1484, Jun. 2002.

\bibitem{Arnold:book}
B.~C. Arnold, N.~Balakrishnan, and H.~N. Nagaraja, \emph{A {F}irst {C}ourse in
  {O}rder {S}tatistics}.\hskip 1em plus 0.5em minus 0.4em\relax New York:
  Wiley, 1992.

\end{thebibliography}

\end{document}